\documentclass[letterpaper, 10 pt, conference]{ieeeconf}  

\IEEEoverridecommandlockouts                              
\usepackage{cite}
\usepackage{amsmath,amssymb,mathrsfs,mathtools}
\usepackage{algorithmic}
\usepackage{graphicx,subfigure}
\usepackage{xcolor}
\usepackage{hyperref}
\hypersetup{hidelinks=true}

\usepackage{enumerate}
\usepackage{color}
\usepackage{booktabs}

\allowdisplaybreaks[4]

\newtheorem{asmp}{\bf\emph{Assumption}}

\newtheorem{prop}{\bf\emph{Proposition}}
\newtheorem{thm}{\bf\emph{Theorem}}

\newtheorem{rmk}{\bf\emph{Remark}}
\newtheorem{defn}{\bf\emph{Definition}}

\newcommand{\RR}{\mathbb{R}}

\newcommand{\II}{\mathbb{I}}

\newcommand{\PP}{\mathbb{P}}
\newcommand{\NN}{\mathbb{N}}

\newcommand{\eps}{\varepsilon}

\newcommand{\mtcb}{\mathcal{B}}

\newcommand{\mtce}{\mathcal{E}}

\newcommand{\mtcg}{\mathcal{G}}

\newcommand{\mtci}{\mathcal{I}}

\newcommand{\mtcs}{\mathcal{S}}

\newcommand{\mtcv}{\mathcal{V}}

\newcommand{\mtcx}{\mathcal{X}}

\newcommand{\Let}{\coloneqq}
\newcommand{\teL}{\eqqcolon}
\newcommand{\sgn}{\textup{sgn}}

\newcommand{\bfb}{\mathbf{b}}

\newcommand{\bfo}{\mathbf{0}}
\newcommand{\bfv}{\mathbf{v}}
\newcommand{\bfx}{\mathbf{x}}
\newcommand{\bfS}{\mathbf{S}}

\newcommand{\tx}{\textup}
\newcommand{\tp}{\top}

\newcommand{\BN}{\mtcb^{(N)}} 
\newcommand{\GW}{W} 
\newcommand{\WN}{W^{(N)}} 
\newcommand{\barAN}{\bar{A}^{(N)}} 
\newcommand{\baraN}{\bar{a}^{(N)}}
\newcommand{\TW}{T_{\GW}}
\newcommand{\TWN}{T_{\WN}} 

\newcommand{\phis}{\varphi^{*}}

\newcommand{\bfxN}{\bfx^{(N)}}
\newcommand{\xN}{x^{(N)}}
\newcommand{\AN}{A^{(N)}}
\newcommand{\aN}{a^{(N)}}
\newcommand{\bfbN}{\bfb^{(N)}}
\newcommand{\bN}{b^{(N)}}
\newcommand{\thetaN}{\theta^{(N)}}
\newcommand{\hatWN}{\hat{W}^{(N)}}
\newcommand{\ThWN}{T_{\hatWN}}
\newcommand{\xiN}{\xi^{(N)}}
\newcommand{\phiN}{\varphi^{(N)}}

\newcommand{\bfvN}{\bfv^{(N)}}

\newcommand{\zetaN}{\zeta^{(N)}}
\newcommand{\zetas}{\zeta^{*}}

\title{\LARGE \bf
Behavior of Nonlinear Opinion Dynamics over Large Networks
}

\author{Yu Xing, Anastasia Bizyaeva, and Karl H. Johansson
\thanks{This work was supported by the Alexander von Humboldt Foundation, Knut and Alice Wallenberg Foundation (Wallenberg Scholar grant), the Swedish Research Council (Distinguished Professor grant 2017-01078), and the Swedish Foundation for Strategic Research (SUCCESS FUS21-0026).}
\thanks{YX is with the Faculty of Computer Science, RWTH Aachen University, Aachen, Germany. 
Email: {\tt\small yu.xing@rwth-aachen.de}. 
AB is with the Sibley School of Mechanical and Aerospace Engineering, Cornell University, Ithaca, NY, USA. Email: {\tt\small anastasiab@cornell.edu}.  
KHJ is with Division of Decision and Control Systems, School of Electrical Engineering and Computer Science, KTH Royal Institute of Technology, and also with Digital Futures, Stockholm, Sweden.
Email: {\tt\small {kallej}@kth.se}.}%
}

\begin{document}

\maketitle
\thispagestyle{empty}
\pagestyle{empty}

\begin{abstract}
    Opinion dynamics have been studied for decades across disciplines, with much of the theoretical literature focusing on behaviors such as consensus, polarization, and clustering.
    Although classic models can exhibit more complex opinion patterns in simulations, quantifying such distributions is not fully understood.
    To address this question, in this paper, we study the behavior of nonlinear opinion dynamics over large-scale networks using graphons, which capture the underlying network structure. 
    In the model, agents update their opinions according to a nonlinear rule that includes saturation effects in interactions. 
    The network is represented by random graphs generated from a graphon, and a corresponding nonlinear dynamical model is introduced over the graphon. 
    We show that the graphon dynamics approximate the finite-dimensional system, when the network size is large. 
    Leveraging spectral approximation results for random graphs, we further show that the equilibria of the nonlinear model can also be approximated by those of the continuum limit. 
    This result enables a quantitative characterization of the opinion distribution based on the underlying graphon structure.
    The theoretical results are illustrated by numerical simulation.
\end{abstract}

\section{INTRODUCTION}
Opinion dynamics have drawn attention for decades across various disciplines, including physics, mathematics, and control systems, due to their wide presence and potential applications in economics, politics, and management~\cite{castellano2009statistical,flache2017models}.
Most traditional studies focus on deriving mathematical conditions under which models converge to typical behaviors such as consensus, polarization, and clustering~\cite{proskurnikov2017tutorial}.
However, real-world opinion formation often exhibits more diverse patterns, for instance, continuous distributions with multiple peaks~\cite{flache2017models,devia2022framework}.
Although classic models are able to reproduce such patterns in simulations~\cite{flache2017models}, the mechanisms governing their emergence are not yet fully characterized theoretically. 

As a first step towards addressing this gap, recent studies~\cite{xing2024concentration,xing2024transient,wang2024final} have characterized how network structure influences final opinion distributions for gossip and the Friedkin--Johnsen models.
Despite the linear evolution, these models can generate various distributions, for instance, with peaks corresponding to network clusters. 
However, real opinion distributions typically do not admit such direct correspondence to network structure, which raises the question of how network structure affects opinion patterns in nonlinear dynamics.
Addressing this question can enable the approximation of large-scale opinion systems by low-dimensional network models, allowing us to predict opinion transition between different collective behaviors (e.g., from consensus to disagreement) with quantitative guarantees, and facilitate learning~\cite{xing2024learning} and control tasks~\cite{kohler2025integer}.

\subsection{Related Work}
This paper considers real-valued opinion dynamics, which provide a theoretical framework for modeling opinion evolution. 
The most basic model is the French--DeGroot model~\cite{degroot1974reaching}, where agents update their opinions by averaging those of their neighbors, leading to group consensus.
Extensions include the Friedkin--Johnsen model~\cite{friedkin1990social}, bounded confidence models~\cite{deffuant2000mixing}, and dynamics with antagonistic interactions~\cite{altafini2012consensus}. 
Unlike these models, nonlinear variants~\cite{fontan2021signed,bizyaeva2022nonlinear,baumann2020modeling} incorporate saturation-based update mechanisms, similar to those observed in biological systems~\cite{franci2015realization}.
These models can capture opinion transitions from consensus to disagreement even in the absence of bias or inputs, and have been used to explain phenomena such as political polarization~\cite{leonard2021nonlinear}, echo chambers on social media~\cite{baumann2020modeling}, and multiparty parliament negotiations~\cite{fontan2021signed}.

Real opinion dynamics often involve a large number of individuals interacting over complex networks.
Such networks often exhibit patterns such as community structure, which can be captured by low-complexity models.
This idea can be formalized through graphons, which arise as limits of graph sequences~\cite{lovasz2012large}. The graphon framework has been extended to characterize quantities such as centrality measures~\cite{avella2018centrality} and the Laplacian spectrum~\cite{vizuete2021laplacian}.
Large dynamical systems can also be analyzed via their continuum limits over graphons~\cite{wiley2006size}. 
This approach has been rigorously developed for systems including nonlinear heat dynamics\cite{medvedev2014nonlineara} and stochastic interacting particle systems~\cite{bayraktar2022stationarity}.
Consensus and bipartite consensus have been established in averaging-based opinion dynamics~\cite{qiao2025consensus,prisant2025asymptotic}, and synchronization has been studied in graphon oscillator models~\cite{nagpal2024synchronization}.
However, existing research focuses on convergence behavior and approximation properties, and does not address how to quantify the influence of network structure on the formation of opinion distributions, which is the focus of this paper.

\subsection{Contributions}
We study nonlinear opinion dynamics through a continuum-limit approximation based on a graphon.
The nonlinear opinion dynamics~\cite{bizyaeva2021patterns,bizyaeva2022nonlinear} describe how agents' attention to nonlinear network interactions can result in a range of asymptotic behaviors, from consensus to disagreement, even without bias or external inputs.

By representing the underlying network structure with a graphon, we establish high-probability bounds on the state deviation between the original finite-dimensional dynamics and its continuum counterpart (Theorem~\ref{thm:dynamic_approx}).
It is shown that the evolution of the finite-dimensional system with a fixed network size can be predicted by the continuum model over a finite time horizon.
Moreover, this horizon increases with the network size.
This result demonstrates how the influence of network structure on system evolution can be reflected by the continuum model over the graphon.

We further show that, when the attention exceeds a certain threshold, the emergent equilibria of the finite-dimensional system can be approximated by the leading eigenfunction of the graphon (Theorem~\ref{thm:vector}).
This result quantifies the approximation error between the opinion distribution and its continuum limit.
To illustrate the consequence of this approximation, we show that the distribution function of agent opinions converges to that of the continuum system, as the network size tends to infinity, under a vanishing $L^{2}$-norm discrepancy between the two opinion functions (Theorem~\ref{thm:distribution}).
This result indicates that the distribution of final opinions in the finite-dimensional system is strongly governed by the limit network structure encoded by the graphon.

Such approximation provides a framework for analyzing large-scale systems and determining pattern transitions without directly computing quantities related to high-dimensional networks.
Furthermore, since the distribution of opinions captures global properties, approximating it implies the approximation of all finite moments of the opinions, including commonly used statistics such as the mean and variance.
By viewing observed opinions as samples from an underlying distribution, the developed framework also suggests potential applications in data-driven modeling and learning.

\subsection{Outline}
The paper is organized as follows. 
In Section~\ref{sec:prel}, we introduce nonlinear opinion dynamics and graphon models, and formulate the problem.
Section~\ref{sec:results} provides main theoretical results, and Section~\ref{sec:simulation} presents associated numerical experiments. 
Section~\ref{sec:conclusion} concludes the paper.

\subsubsection*{\bf Notation}
A vector is denoted by a boldface letter, e.g., $\bfx$, and its $i$-th entry by $x_i$. 
For a matrix $A \in \mathbb{R}^{n\times n}$, $a_{ij}$ or $[A]_{ij}$ denotes its $(i,j)$-th entry.
For a symmetric $A \in \RR^{n\times n}$, denote its largest eigenvalue by $\lambda_{\max}(A)$. 
By $I_n$ we denote the identity matrix. 
The unit interval is represented by $\mtci \Let [0,1]$.
The notation $\|\cdot\|_{2}$ is used for the Euclidean norm of a vector, the spectral norm of a matrix or an operator, and the $L^{2}$-norm of a function. 
The function $\II_{[\textup{property}]}$ is the indicator function, which is one if the property in the bracket holds, and zero otherwise, and we write $\II_{\mtcb}(x) \Let \II_{[x\in B]}$ for $x \in \mtcx$, sets $\mtcb$ and $\mtcx$ with $\mtcb \subset \mtcx$. 
An undirected graph is denoted by $\mtcg = (\mtcv,\mtce,A)$, where $\mtcv$ is the agent set, $\mtce$ is the edge set, and $A = [a_{ij}]$ is the adjacency matrix such that $a_{ij} = 1$ ($a_{ij} = 0$) if $\{i,j\}\in \mtce$ ($\{i,j\}\not\in\mtce$).

\section{Preliminaries and Problem Statement}\label{sec:prel}
In the first two subsections, we introduce nonlinear opinion dynamics and graphon models.
The problem studied in this paper is then formulated in Section~\ref{subsec:formulation}.

\subsection{Nonlinear Opinion Dynamics}
We consider the following nonlinear opinion dynamics taking place over an undirected graph $\mtcg = (\mtcv,\mtce, \AN)$ with $\mtcv = \{1,\dots,N\}$ and no self-loops ($a_{ii}=0$). 
Each agent $i \in \mtcv$ has a state $\xN_{i}(t) \in \RR$, at continuous time $t\in \RR_+$, and updates the opinion according to
\begin{multline}\label{eq_opinion_model_agentform}
    \dot{x}^{(N)}_i = - \bar{d} \xN_i + u S\Big(\alpha \xN_i \\ + \frac{\gamma}{N \kappa_{N}} \sum_{k\in\mtcv} \aN_{ik} \xN_k\Big) + \bN_{i}, 
\end{multline}
where $\bar{d} > 0$ is the damping factor, $u > 0$ is the agent attention parameter to the nonlinear network interaction, and $S= \tanh$, i.e., the hyperbolic tangent. 
Note that $S$ can be other odd sigmoid functions such that $S(0)=0$, $S^{\prime}(0)=1$, and $\sgn(S^{\prime\prime}(z)) = - \sgn(z)$. 
In the nonlinear function $S(\cdot)$, $\alpha \ge 0$ is the self weight, $\gamma > 0$ is the influence weight of other agents, and $\kappa_{N} \in (0,1]$ is a scaling factor depending only on $N$.
The case with $\gamma < 0$ describes the scenario where agents have antagonistic interactions, which will be explored in future work. 
Finally, $\bfbN = [\bN_{i}]_{1 \leq i \leq N}$ captures individual bias or external input. 

The compact form of~\eqref{eq_opinion_model_agentform} can be written as
\begin{align}\label{eq:opinion_model_compactform}
    \dot{\bfx}^{(N)} = - \bar{d} \bfxN + u \bfS \Big(\alpha \bfxN + \frac{\gamma}{N \kappa_{N}} \AN \bfxN \Big) + \bfbN, 
\end{align}
where $\bfS(\bfx) \Let [S(x_1),\dots,S(x_{N})]^\tp$ for $\bfx \in \RR^{N}$ is the entrywise hyperbolic tangent function.
The model and its extensions have been thoroughly studied, focusing on their bifurcation and steady-state behavior~\cite{bizyaeva2022nonlinear,bizyaeva2021patterns}. 
When the attention $u$ is small, network interactions have little influence on agent opinion formation, and all agents reach a consensus at the origin.
As agents increase their attention to opinion exchange over the network beyond a specific threshold, nonlinear dynamics make them adopt an opinion (negative or positive), even without bias or input.
This emergence of new equilibria is characterized by a pitchfork bifurcation, demonstrated by the following result and also in Fig.~\ref{fig:illus_ode}.
\begin{figure*}
    \centering
    \subfigure[\label{fig:illus_consensus}Consensus with a small $u$.]{
        \includegraphics[width=0.26\textwidth]{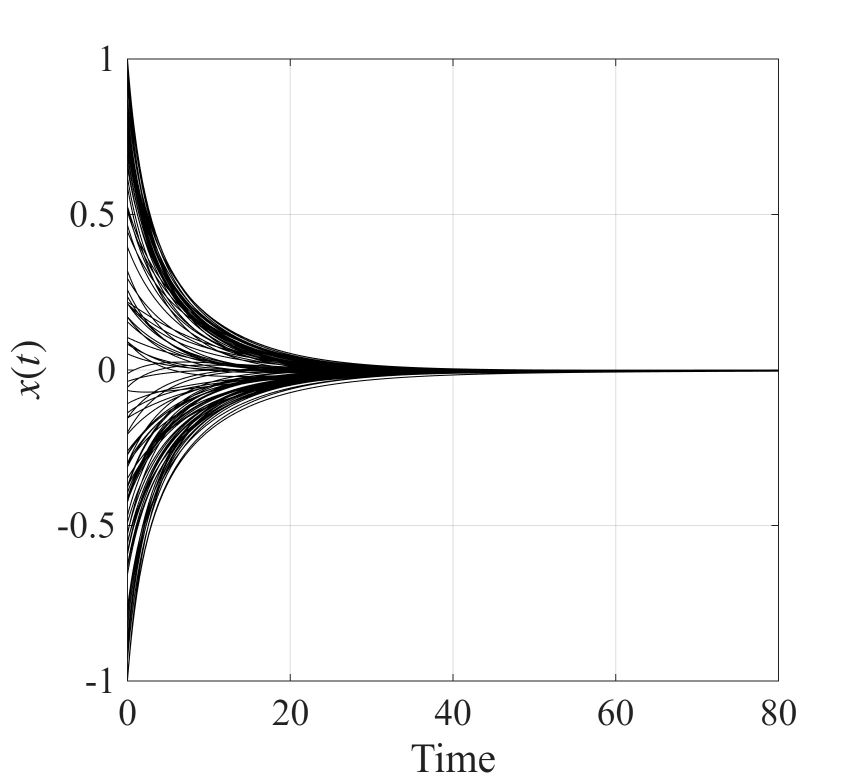} 
    } ~~
    \subfigure[\label{fig:illus_diverse}A new equilibrium with a larger $u$.]{ 
        \includegraphics[width=0.26\textwidth]{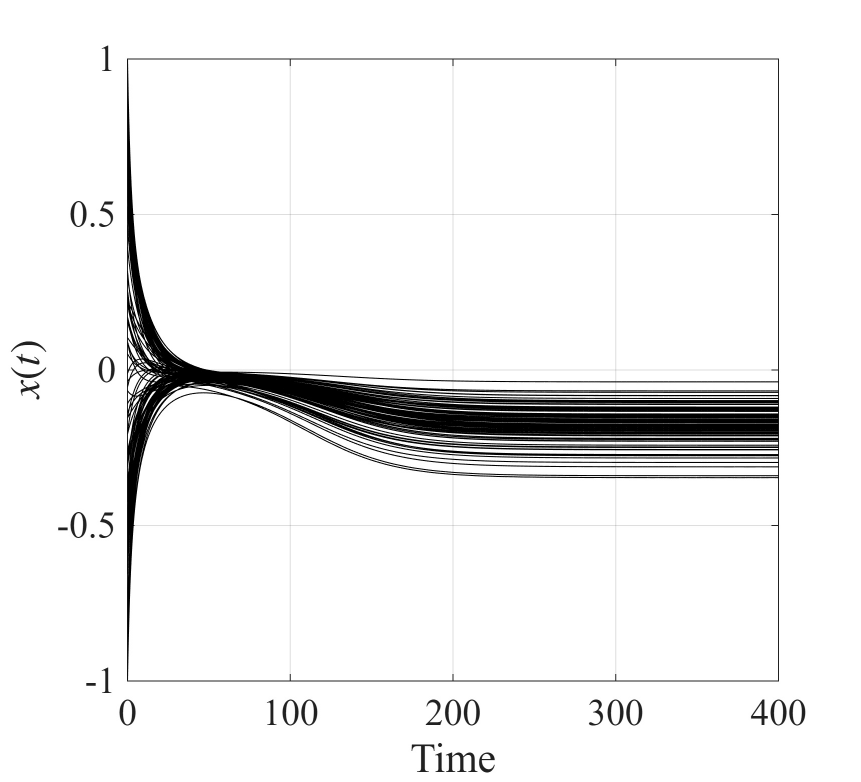}  
    } ~~
    \subfigure[\label{fig:illus_bifur}Pitchfork bifurcation. Blue (red) lines represent stable (unstable) equilibria.]{
        ~~~~~~~
        \includegraphics[width=0.24\textwidth]{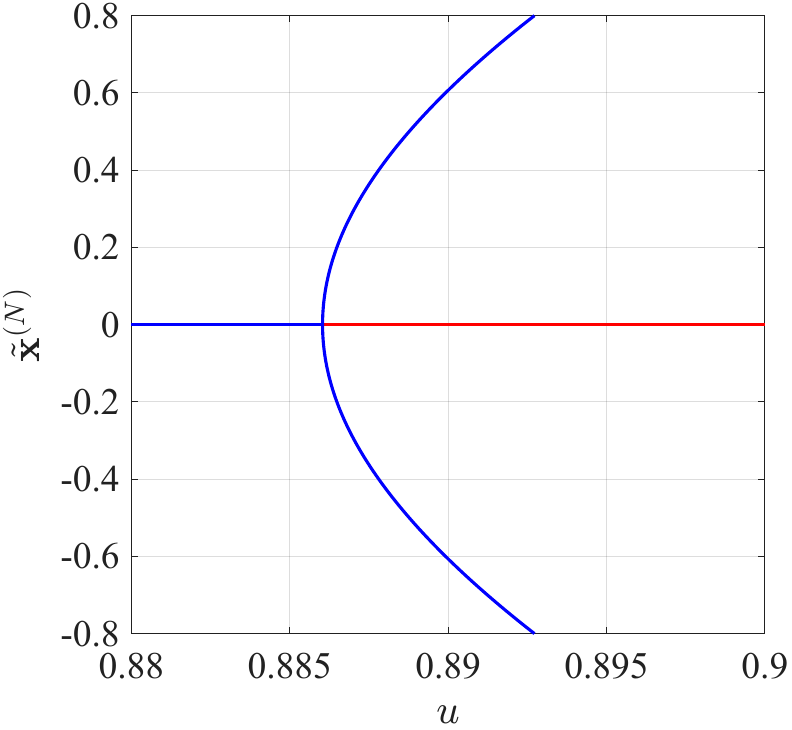}
        ~~~~~~~
    }
    \caption{\label{fig:illus_ode}Illustration of the dynamics and bifurcation of system~\eqref{eq:opinion_model_compactform}.}
\end{figure*}
%

\begin{prop}[Pitchfork bifurcation~\cite{bizyaeva2021patterns}]\label{prop_bifurcation}~\\\indent
    Suppose that $\mtcg$ is connected, $\bar{d}, u, \alpha, \gamma > 0$, and $\bfbN = \bfo$. 
    Then the origin $\bfxN = \bfo$ is a locally exponentially stable equilibrium for $0 < u < u^{*}$ and unstable for $u > u^{*}$, where $u^{*} \Let \bar{d}/(\alpha + \frac{\gamma}{N \kappa_{N}} \lambda_{\max}(\AN))$. 
    At $u = u^{*}$, branches of equilibria $\tilde{\bfx}^{(N)} \not= \bfo$ emerge in a steady-state bifurcation off of $\bfxN = \bfo$ along a manifold tangent to the eigenvector corresponding to $\lambda_{\max}(\AN)$ at $(\bfo, u^{*})$, and the entries of $\tilde{\bfx}^{(N)}$ have the same sign for $|u - u^{*}|$ sufficiently small.
 \end{prop}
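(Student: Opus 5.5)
The plan is to linearize~\eqref{eq:opinion_model_compactform} at the origin, read off stability from the spectrum, and then apply a Lyapunov--Schmidt reduction at $u=u^{*}$. Write $M \Let \alpha I_N + \frac{\gamma}{N\kappa_{N}}\AN$, which is symmetric. With $\bfbN=\bfo$, the right-hand side is $F(\bfx,u) = -\bar d\bfx + u\bfS(M\bfx)$, and $F(\bfo,u)=\bfo$ for all $u$ since $S(0)=0$. Because $S'(0)=1$, the Jacobian at the origin is $J(u) = -\bar d I_N + uM$, with real eigenvalues $-\bar d + u\mu_k$, where $\mu_k = \alpha + \frac{\gamma}{N\kappa_N}\lambda_k(\AN)$. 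The largest is $-\bar d + u(\alpha + \frac{\gamma}{N\kappa_N}\lambda_{\max}(\AN))$, since $\gamma>0$. This is negative exactly when $u<u^{*}$, which gives local exponential stability by Lyapunov's indirect method. It is positive when $u>u^{*}$, which gives instability. Note that $u^{*}>0$ because $\lambda_{\max}(\AN)>0$ for a connected graph with at least one edge.

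For the bifurcation, Perron--Frobenius applies because $\AN$ is nonnegative, irreducible (connectivity) and symmetric. So $\lambda_{\max}(\AN)$ is simple, with an eigenvector $\bfv$ having strictly positive entries. Hence at $u=u^{*}$ the kernel of $J(u^{*})$ is one-dimensional, spanned by $\bfv$. Since $J(u^{*})$ is symmetric, its range is $\bfv^{\perp}$.

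The key steps are as follows.
\begin{enumerate}[(i)]
\item Decompose $\bfx = s\bfv + \mathbf{w}$ with $\mathbf{w}\perp\bfv$. Solve the complementary equation $P_{\perp}F(s\bfv+\mathbf{w},u)=\bfo$ for $\mathbf{w}=\mathbf{w}(s,u)$ by the implicit function theorem; this works because $P_\perp J(u^*)|_{\bfv^\perp}$ is invertible. Since $F$ is odd in $\bfx$, we get $\mathbf{w}(s,u)=O(s^{3})$ and $\mathbf{w}(-s,u)=-\mathbf{w}(s,u)$.
\item Obtain the scalar bifurcation equation $g(s,u)=\bfv^\tp F(s\bfv+\mathbf{w}(s,u),u)=0$, where $g$ is odd in $s$. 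Write $g(s,u)=s\,h(s,u)$.
\item Compute the derivatives at $(0,u^{*})$: $h=0$ there, and $\partial_u h = \mu_{\max}\|\bfv\|^{2}>0$, giving transversality. In addition $\partial_s^{2} h = \tfrac{u^{*}}{3}\sum_i v_i S'''(0)(M\bfv)_i^{3}\neq 0$, using $S'''(0)<0$ (from $\sgn(S'')=-\sgn(z)$; for $\tanh$, $S'''(0)=-2$) and $M\bfv=\mu_{\max}\bfv$ with $v_i>0$. This is the standard nondegeneracy condition for a $\mathbb{Z}_2$-symmetric pitchfork, so $h=0$ defines a curve $u = u^{*} + c s^{2} + O(s^{4})$ with $c\ne0$. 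These are the branches $\tilde\bfx^{(N)} = s\bfv + O(s^3)$, tangent to $\bfv$.
\end{enumerate}

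Finally, for $|u-u^{*}|$ small, $|s|$ is small, so $\tilde\bfx^{(N)}/s = \bfv + O(s^{2})$. Since $\min_i v_i>0$, every entry has the sign of $s$.

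The main obstacle is the careful bookkeeping in step (iii). One must verify that the $O(s^3)$ correction $\mathbf{w}$ does not contribute to the cubic coefficient; it does not, because it enters $g$ only at order $s\cdot s^{3}$ through $J(u^*)$, whose range is orthogonal to $\bfv$. One must also verify that the sign condition on $S''$ forces $S'''(0)\le 0$; in degenerate sigmoid cases where $S'''(0)=0$, one would have to go to a higher odd derivative. For $\tanh$ this difficulty does not arise.
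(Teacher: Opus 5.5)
Your proposal is correct and follows the standard route. The paper does not reprove this proposition but cites \cite{bizyaeva2021patterns}, where the argument is linearization at the origin, then a Lyapunov--Schmidt reduction onto the simple Perron--Frobenius eigenvector to get a $\mathbb{Z}_2$-symmetric pitchfork, with the sign pattern inherited from that positive eigenvector, just as you do. One point to tighten: oddness alone only gives $\mathbf{w}(s,u)=s\,\partial_s\mathbf{w}(0,u)+O(s^{3})$; the linear term vanishes because $\bfv$ is an eigenvector of $J(u)$ for every $u$, so $P_{\perp}J(u)\bfv=\bfo$, and invertibility of $P_{\perp}J(u)$ on $\bfv^{\perp}$ then forces $\partial_s\mathbf{w}(0,u)=\bfo$.
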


The opinion $\xN_i$ shows the agent $i$'s belief about two options. 
The sign $\sgn(\xN_i)$ indicates the option that the agent supports, and $\xN_i = 0$ is the neutral opinion. 
Proposition~\ref{prop_bifurcation} implies that agreement equilibria with an identical sign emerge from the neutral opinion as $u$ increases, in the case of a positive influence weight. 

\subsection{Graphons and Random Graphs}
Graphons provide a quantitative description of network structure.
A graphon is a bounded symmetric measurable function $\GW \colon \mtci^2 \to \mtci$, which can be considered as the limit of a convergent sequence of graphs with increasing size~\cite{lovasz2012large}. 
Each graphon $\GW$ defines an integral linear operator $\TW\colon L^2(\mtci) \to L^2(\mtci)$, which is called the associated graphon operator, where $L^2(\mtci)$ is the $L^{2}$-space on the unit interval $\mtci$. It holds that, for $f \in L^2(\mtci)$,
\begin{align*}
    (\TW f) (y) = \int_{\mtci} \GW(y,z) f(z) dz, ~y \in \mtci.
\end{align*}
This operator can be considered as an extension of the adjacency matrix to the continuum case.

We model a network as a sample drawn from a random graph defined by a graphon as follows.
\begin{defn}[Random graph model]\label{defn:random_graph}
    Given a graphon $\GW$, network size $N\in \NN_+$, and a scaling factor $\kappa_{N} \in (0,1]$, the random graph model $\tx{RG}(\GW, N, \kappa_{N})$ constructs a simple graph $\mtcg = (\mtcv, \mtce, \AN)$ with $\mtcv = \{ 1, \dots, N\}$ according to the following procedure:
    \begin{enumerate}
        \item Construct the weight matrix $\barAN = [\baraN_{ij}] \in [0,1]^{N\times N}$, whose entries are the averages of the graphon $\GW$ over the rectangles $\BN_{i} \times \BN_{j}$
        \begin{align*}
            \baraN_{ij} \Let N^2 \int_{\BN_{i} \times \BN_{j}} W(y,z) dy dz, 
        \end{align*}
        where $\BN_{i} \Let ((i-1)/N, i/N]$ for $1 \leq i \leq N$.        
        \item Define the probability matrix $P^{(N)} = [p^{(N)}_{ij}] \in [0,1]^{N\times N}$ by 
        \begin{align*}
            p^{(N)}_{ij} = 
            \begin{cases}
                \kappa_{N} \baraN_{ij}, &  i \neq j,\\
                0, & i = j,
            \end{cases}
        \end{align*}
        where the scaling factor $\kappa_{N}$ controls the expected average degree of an agent. 
        \item Add the edge $\{i,j\}$ to $\mtce$ with probability $p^{(N)}_{ij}$ independently of other edges.
    \end{enumerate}
\end{defn}

Intuitively, a graphon represents an idealized network structure, and an observed network can be viewed as a noisy realization of this structure.
Fig.~\ref{fig:illus_graphs} provides an illustration, where the graphon has two communities (with agents in $[0, 0.9]$ and $(0.9, 1]$, respectively) with an identical link value within each community, and as a result, the finite network is generated from a stochastic block model.
There are other ways to define a weight matrix $\barAN$ from a graphon, for example, by sampling $N$ points from the unit interval, $x_{1}, \dots, x_{N} \in  \mtci$, and setting $\baraN_{ij} = \GW(x_{i}, x_{j})$~\cite{avella2018centrality}.
The analysis techniques are similar to those used in the current paper.

We introduce the following assumption of piecewise Lipschitz positive graphons.
The piecewise condition is motivated by the community structure shown in Fig.~\ref{fig:illus_graphs}, whereas the positivity ensures the connectivity of the graphon.
\begin{asmp}\label{asmp:graphon_piecewise+positive}~
    \begin{enumerate}
        \item Let $\tx{Lip}_{W}>0$ and $k_{W} \in \NN_+$, and $0 = \iota_{0} < \cdots < \iota_{k_{W}}=1$ be constants, and denote $\mtci_{k} \Let (\iota_{k-1}, \iota_{k}]$, $1 \leq k \leq k_{W}$. 
        For all $1 \leq k, q \leq k_{W}$ and point pairs $(y,z), (y^{\prime}, z^{\prime}) \in \mtci_{k} \times \mtci_{q}$, it holds that
        \begin{align*}
            |\GW(y,z) - \GW(y^{\prime},z^{\prime})| \leq \tx{Lip}_{W}(|y - y^{\prime}| + |z-z^{\prime}|).
        \end{align*}
        \item There exists $\eps_{W} > 0$ such that $\GW(y,z) \geq \eps_{W}$ for all $y,z\in\mtci$.
    \end{enumerate}
\end{asmp}
\begin{figure*}
    \centering
    \subfigure[\label{fig:illus_graphon}The value of the graphon $W$ with two blocks.]{
        ~~~~~~\includegraphics[width=0.28\textwidth]{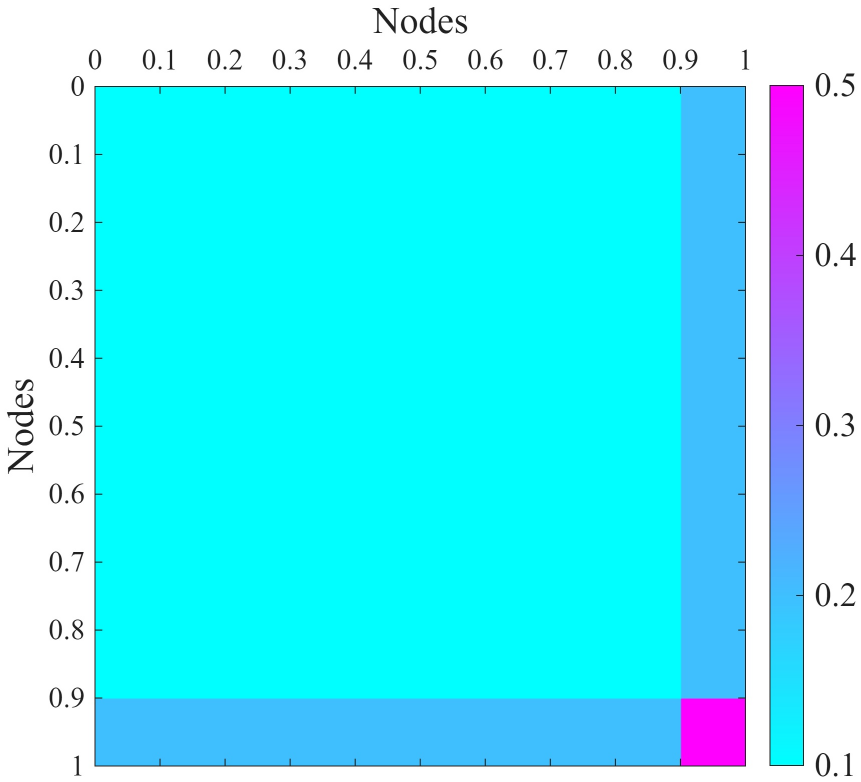} ~~~~~~
    }~~~~~~
    \subfigure[\label{fig:illus_sbm}The network sampled from the random graph model.]{ 
        ~~~~~~\includegraphics[width=0.35\textwidth]{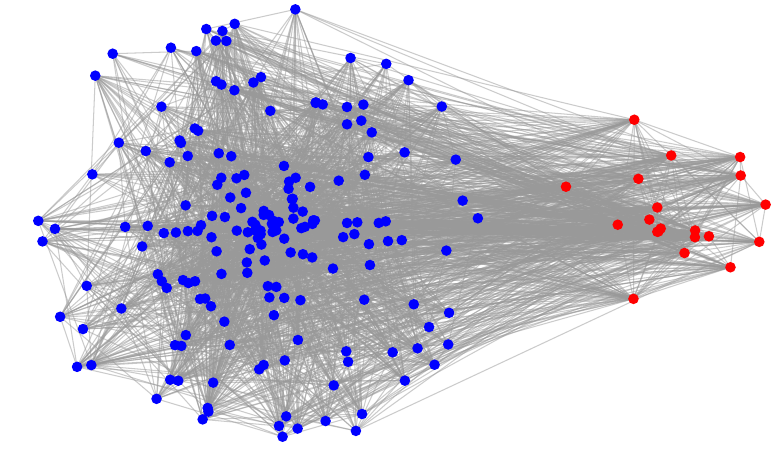}  ~~~~~~
    }
    \caption{\label{fig:illus_graphs}Illustration of a graphon $W$ and a random graph with $N = 200$, where $W(y,z) = 0.1$ for $0 < y, z \leq 0.9$, $W(y, z) = 0.5$ for $0.9 < y, z \leq 1$, and $W(y, z) = 0.2$ otherwise.
    Both the graphon and the random graph have two communities.}
\end{figure*}

\subsection{Problem Formulation}\label{subsec:formulation}
Recall the nonlinear opinion dynamics~\eqref{eq:opinion_model_compactform} over a graph $\mtcg$. 
Suppose that the graph structure can be captured by an idealized model, namely a graphon $\GW$, and we observe the sampled finite network $\mtcg$.
Our goal is to approximate the behavior of the finite-dimensional system using this graphon representation.
To this end, we compare the original model with the following continuum nonlinear dynamics defined over the graphon $\GW$,
\begin{multline}\label{eq:opinion_continuum}
    \frac{\partial \theta(y,t)}{\partial t} = - \bar{d} \theta(y,t) \\
    + u S \bigg ( \alpha \theta(y,t) +  \gamma  \int_{\mtci} \GW(y, z) \theta(z, t) d z \bigg ) + b(y),
\end{multline}
where $y\in \mtci$, $\theta(y,t)$ is the opinion of $y$ at time $t$, and $b(y)$ is the input for agent $y$. 
Note that under regular conditions of the initial condition $\theta(\cdot, 0)$ and input $b$, the existence and uniqueness of the solution to~\eqref{eq:opinion_continuum} can be established, similar to~\cite{medvedev2014nonlineara,nagpal2024synchronization,prisant2025asymptotic}.

We are interested in understanding how well the continuum model captures both the dynamical and asymptotic behaviors of the finite-dimensional system.
In particular, this paper aims to quantify the approximation quality of the continuum model over the graphon, as follows.

{\bf Problem.} Given a graphon $\GW$ and a graph $\mtcg$ generated by $\GW$. Bound the deviation between the states of the finite-dimensional dynamics~\eqref{eq:opinion_model_compactform} over $\mtcg$ and the continuum dynamics~\eqref{eq:opinion_continuum} over $\GW$, and approximate the equilibria of~\eqref{eq:opinion_model_compactform} by using those of~\eqref{eq:opinion_continuum}.

The approximation of dynamics is addressed in Theorem~\ref{thm:dynamic_approx}, which indicates that the evolution of the finite-dimensional system can be predicted by the continuum system over a time horizon increasing with the network size.
The approximation of equilibria is studied in Theorems~\ref{thm:vector} and~\ref{thm:distribution}.
The results describe how the final opinions in the finite-dimensional system and their distributions are governed by the limit network structure given by the graphon. 

\section{Main Results}\label{sec:results}
In this section, we study the approximation of the finite-dimensional dynamics~\eqref{eq:opinion_model_compactform} by its continuum limit~\eqref{eq:opinion_continuum}.
In Section~\ref{sec:approx_dynamics}, we show that the two dynamics are close to each other over a finite time horizon.
In Sections~\ref{sec:approx_equilibria} and~\ref{sec:approx_distribution}, it is shown that the final opinion vector in the finite system can be approximated by its limit, so does the distribution function of the opinions.

\subsection{Approximation of Dynamics}\label{sec:approx_dynamics}
We compare the dynamics of the finite-dimensional system~\eqref{eq:opinion_model_compactform} with that of the continuum model~\eqref{eq:opinion_continuum}.
Note that the former system includes $N$ agents, whereas the latter is characterized by a continuum of agents defined on the unit interval.
As a basis of comparison, consider the piecewise constant interpolant of the system~\eqref{eq:opinion_model_compactform} with dimension $N$ as follows, 
\begin{align}\label{eq:piecewise_interpolant}
    \thetaN(y, t) = \sum_{i = 1}^{N} \xN_{i}(t) \II_{\BN_{i}}(y), ~y \in \mtci,
\end{align}
where $\BN_{i} = ((i - 1) / N, i / N]$, $1 \leq i \leq N$.
Denote the step graphon $\hatWN$ defined by the adjacency matrix $\AN = [\aN_{ij}]$ as follows
\begin{align*}
    \hatWN(y, z) \Let \sum_{k = 1}^{N} \sum_{\ell = 1}^{N} \frac{\aN_{k \ell}}{\kappa_{N}} \II_{\BN_{k}}(y) \II_{\BN_{\ell}}(z), \quad y, z \in \mtci.
\end{align*}
Then system~\eqref{eq:piecewise_interpolant} can be written as a piecewise constant continuum system
\begin{multline}\label{eq:piecewise_interpolant_compact}
    \frac{\partial \thetaN(y, t)}{\partial t} = - \bar{d} \thetaN(y, t) + u S \bigg( \alpha \thetaN(y, t) \\ + \gamma \int_{\mtci} \hatWN(y, z) \thetaN(z, t) dz \bigg) + \bN(y), ~y \in \mtci.
\end{multline}
Here we assume that the initial condition $\thetaN(\cdot, 0)$ and bias $\bN$ of system~\eqref{eq:piecewise_interpolant_compact} are close to those of the continuum system~\eqref{eq:opinion_continuum}, to obtain a vanishing error bound.
In general, the deviation of the finite-dimensional system from the continuum model depends on the discrepancy of both the initial condition and the input.
Specifically, assume that for all $i \in \{1, \dots, N\}$ and $y \in \BN_{i}$, $\xN_{i}(0) = \thetaN(y, 0) = N \int_{\BN_{i}} \theta(y, 0) dy$, and $\bN(y) = N \int_{\BN_{i}} b(y) dy$.
That is, the initial value (input) of an agent~$i$ is given by the average of the continuum initial condition $\theta(\cdot, 0)$ (input $b$) over the interval $\BN_{i}$.

Let $\xiN(y, t) \Let \thetaN(y, t) - \theta(y, t)$ be the difference between the interpolant and the continuum systems. 
We have the following result for the approximation error between the dynamics~\eqref{eq:piecewise_interpolant_compact} and~\eqref{eq:opinion_continuum}.
\begin{thm}[Approximation of dynamics]\label{thm:dynamic_approx}~\\\indent
    Suppose that Assumption~\ref{asmp:graphon_piecewise+positive} holds, and that $\theta(\cdot, 0)$ and $b$ are piecewise Lipschitz continuous. 
    In addition, $\kappa_{N} \geq \log N / (\eps_{W} N)$.
    Then there exist positive constants $r$ and $C$ such that, for any $\eps > 0$ and $T \leq C \log (\eps^{2} N \kappa_{N})$,
    \begin{align*}
        \PP\bigg\{ \sup_{t \in [0, T]} \|\xiN(\cdot, t)\|_{2} \leq \eps \bigg\} \geq 1 - N^{-r},
    \end{align*}
    where $C$ depends on $c_{0}$, $r$, and the parameters $\bar{d}$, $u$, $\alpha$, and $\gamma$ of system~\eqref{eq:opinion_model_compactform}.
\end{thm}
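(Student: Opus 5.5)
We will run a Grönwall-type energy estimate on the error $\xiN$ and feed it two kinds of perturbation terms: deterministic discretization errors and a random spectral fluctuation of the sampled graph. Subtracting~\eqref{eq:opinion_continuum} from~\eqref{eq:piecewise_interpolant_compact} gives
\begin{multline*}
\partial_t \xiN = -\bar d\, \xiN + u\Big[S\big(\alpha\thetaN + \gamma T_{\hatWN}\thetaN\big) - S\big(\alpha\theta + \gamma \TW\theta\big)\Big] \\ + \big(\bN - b\big).
\end{multline*}
Since $S=\tanh$ is $1$-Lipschitz, the bracket is bounded pointwise by $\alpha|\xiN| + \gamma|T_{\hatWN}\xiN| + \gamma|(T_{\hatWN}-\TW)\theta|$.

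Multiplying by $\xiN$, integrating over $\mtci$, and applying Cauchy--Schwarz, we obtain
\begin{multline*}
\tfrac{d}{dt}\|\xiN\|_2 \le \big(-\bar d + u\alpha + u\gamma\|T_{\hatWN}\|_2\big)\|\xiN\|_2 \\ + u\gamma\|(T_{\hatWN}-\TW)\theta(\cdot,t)\|_2 + \|\bN-b\|_2 .
\end{multline*}
Grönwall's inequality then yields
\[
\sup_{t\le T}\|\xiN(\cdot,t)\|_2 \le e^{LT}\Big(\|\xiN(\cdot,0)\|_2 + T\big(\|\bN-b\|_2 + u\gamma\sup_{t\le T}\|(T_{\hatWN}-\TW)\theta(\cdot,t)\|_2\big)\Big),
\]
where $L = u\alpha+u\gamma\|T_{\hatWN}\|_2$. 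Because $\|T_{\hatWN}\|_2\le\|\TW\|_2+\|T_{\hatWN}-\TW\|_2\le 1+o(1)$ on the good event, $L$ is a deterministic constant.

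The perturbation terms are handled as follows.

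\emph{Initial condition and input.} Under piecewise Lipschitz continuity, the averaging over $\BN_i$ gives $\|\xiN(\cdot,0)\|_2$ and $\|\bN-b\|_2$ of order $N^{-1/2}$. The Lipschitz pieces contribute $O(1/N)$, and the $O(k)$ intervals containing breakpoints contribute $O(N^{-1/2})$ in $L^2$.

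\emph{Operator error.} We split $T_{\hatWN}-\TW$ into two pieces:
\begin{itemize}
\item $T_{\hatWN}-T_{\barAN\text{-step}}$, which is random;
\item $T_{\barAN\text{-step}}-\TW$, which is deterministic and of size $O(N^{-1/2})$ in Hilbert--Schmidt norm by the same piecewise Lipschitz argument, with the diagonal removal costing $O(1/N)$.
\end{itemize}
The random part has operator norm $\|\AN-\mathbb{E}\AN\|_2/(N\kappa_N)$. A matrix concentration bound for inhomogeneous random graphs (Bernstein or Lu--Peng/Feige--Ofek type) gives $\|\AN-\mathbb E \AN\|_2\le c_0\sqrt{N\kappa_N}$ with probability $\ge 1-N^{-r}$, provided $N\kappa_N\gtrsim\log N$. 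This is exactly where $\kappa_N\ge\log N/(\eps_W N)$ enters, since then the maximum expected degree is at least $\log N$. On this event the random part is at most $c_0/\sqrt{N\kappa_N}$. We also need $\sup_t\|\theta(\cdot,t)\|_2$ bounded uniformly in $t$. This holds because $|S|\le1$, so $|\theta(y,t)|\le |\theta(y,0)|+(u+\|b\|_\infty)/\bar d$.

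Combining, on the good event we get $\sup_{t\le T}\|\xiN\|_2\le C' e^{LT}(1+T)(N\kappa_N)^{-1/2}$, using $\kappa_N\le1$ to absorb the $N^{-1/2}$ terms. Absorbing $1+T$ into $e^{L'T}$, this is $\le\eps$ whenever $T\le \frac{1}{2L'}\log(\eps^2 N\kappa_N/C'^2)$, which is of the stated form $C\log(\eps^2N\kappa_N)$ after adjusting constants. The main obstacle is the spectral concentration step in the sparse regime $\kappa_N\sim\log N/N$. There one needs a bound for the operator norm rather than the Frobenius norm, and with the correct $\sqrt{N\kappa_N}$ scaling. I would cite a standard result for independent Bernoulli entries with variance profile bounded by $\kappa_N$. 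The constant $c_0$ depends on $r$, which explains $C$'s dependence on $c_0$ and $r$.
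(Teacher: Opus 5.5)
Your proposal is correct and follows the same architecture as the paper's proof. Both arguments use Lipschitz continuity of $\tanh$, a Gr\"onwall estimate, $O(N^{-1/2})$ discretization errors for the initial condition, the input and the step graphon $W^{(N)}$, and a sparse spectral bound $\|\AN-\mathbb{E}\AN\|_2=O(\sqrt{N\kappa_N})$ for the random part.

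The real difference is how you telescope the interaction term, and it changes which a priori bound each argument needs:
\begin{itemize}
\item The paper writes $\ThWN\thetaN-\TW\theta=(\ThWN-\TW)\thetaN+\TW\xiN$. Its Gr\"onwall rate $u(\alpha+|\gamma|)+\bar d$ is therefore deterministic, since only $\|\TW\|_2\le 1$ is used. The price is a bound on the finite-dimensional trajectory $\thetaN$ that is uniform in $t$ and $N$ (the constant $C_{\tx{s}}$).
\item You write it as $\ThWN\xiN+(\ThWN-\TW)\theta$. This needs only a bound on the continuum solution, which you verify from $|S|\le 1$. The price is that your rate depends on the random quantity $\|\ThWN\|_2$, so you must restrict to the high-probability event before applying Gr\"onwall, as you correctly do.
\end{itemize}
Your energy estimate also keeps the sign of the damping term. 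The paper's triangle-inequality bound on $\|\partial_t\xiN\|_2$ instead adds $\bar d$ to the growth rate, but this only affects constants.

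Two minor precisions. First, of the concentration results you list, only the Feige--Ofek/Lei--Rinaldo bound gives the $\sqrt{N\kappa_N}$ scaling exactly at the threshold $N\kappa_N\eps_W\ge\log N$; the paper cites Lei--Rinaldo, Theorem 5.2. Matrix Bernstein loses a $\sqrt{\log N}$ factor, and Lu--Peng requires maximum expected degree $\gg\log^4 N$. Second, turning $\tfrac{1}{2L'}\log(\eps^2N\kappa_N/C'^2)$ into $C\log(\eps^2N\kappa_N)$ is only valid once $\eps^2N\kappa_N$ exceeds a fixed constant. This looseness is already in the theorem's statement and is not addressed by the paper's proof either, so it is not a defect of your argument relative to theirs.
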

\begin{proof}
    See Appendix~\ref{sec:proof_thm:dynamic_approx}.
\end{proof}

The theorem indicates that the norm of the error $\xiN(\cdot, t)$ can be upper bounded over a finite time interval with high probability.
As the network size grows, the length of the interval increases and the failure probability decreases.
As a consequence, the dynamics of the finite-dimensional model can be predicted by the continuum dynamics over the graphon that quantifies the network structure.
\begin{rmk}
    The probability in the bound comes from the sampling of the random graph.
    To ensure a valid upper bound for the time interval, the expected degree $N \kappa_{N}$ should be much larger than $1 / \varepsilon^{2}$.
    In addition, if $N \kappa_{N} = \Theta(N^{c})$ for some $c > 0$, then the upper bound of $T$ is $C \log N$, of the same order as in the existing literature~\cite{medvedev2019continuum}.
    As shown in the simulation, the approximation error stays small over a longer period.
    Tightening of the bound will be studied in future work.
\end{rmk}

\subsection{Approximation of Equilibria}\label{sec:approx_equilibria}
Although the continuum limit captures the behavior of the finite-dimensional system only over finite time horizons, the equilibria of the latter can be approximated by those of the former when the attention parameter $u$ is near its bifurcation threshold.
Denote the leading positive unit eigenvector of $\AN$ by $\bfv^{(N)}$, and that of the graphon operator $\TW$ by $\phis$.
In addition, denote $\phiN(y) \Let \sqrt{N} \sum_{i = 1}^{N} \bfvN_{i} \II_{\BN_{i}}(y)$.
The following result states that the two functions $\phiN$ and $\phis$ are close, when the network is large.
\begin{thm}[Approximation of equilibria]\label{thm:vector}~\\\indent
    Suppose that Assumption~\ref{asmp:graphon_piecewise+positive} holds, and the graphon operator $\TW$ has a unique maximum eigenvalue $\lambda_{\max}(\TW)$ with the corresponding positive unit eigenfunction $\phis$.
    In addition, $\kappa_{N} \geq \log N / (\eps_{W} N)$.
    Then with probability $1 - N^{-r}$ for some $r > 0$, 
    \begin{align*}
        \| \phiN - \phis \|_{2} = O \bigg(\frac{1}{\sqrt{N \kappa_{N}}} \bigg).
    \end{align*}
\end{thm}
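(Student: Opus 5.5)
Throughout, all operators will be viewed on $L^{2}(\mtci)$ and related by a Davis--Kahan (sin-$\Theta$) argument for the leading eigenfunction. The step function $\phiN$ is exactly the leading unit eigenfunction of the step-graphon operator $\ThWN$ defined by $\hatWN$. Indeed, $\ThWN$ acts on step functions constant on the $\BN_{i}$ as the matrix $\AN/(N\kappa_{N})$, and it annihilates the orthogonal complement of these step functions. The normalization $\sqrt{N}$ makes $\|\phiN\|_{2}=\|\bfvN\|_{2}=1$, and the eigenvalue is $\lambda_{\max}(\AN)/(N\kappa_{N})$. So the statement reduces to comparing the top eigenfunctions of two self-adjoint integral operators, $\ThWN$ and $\TW$.

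I will bound $\|\ThWN-\TW\|_{2}$ by splitting it into a random part and a deterministic part,
\begin{align*}
\|\ThWN-\TW\|_{2}\le \|\ThWN-T_{\bar W^{(N)}}\|_{2}+\|T_{\bar W^{(N)}}-\TW\|_{2},
\end{align*}
where $\bar W^{(N)}$ is the step graphon with values $\baraN_{ij}$.

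The random part equals $\|\AN-\kappa_{N}\barAN\|_{2}/(N\kappa_{N})$. The only extra deterministic term here is the removed diagonal, which has norm at most $\kappa_{N}$. I will use a standard spectral concentration bound for inhomogeneous Erd\H{o}s--R\'enyi graphs (e.g.\ Lei--Rinaldo or a matrix Bernstein variant). The condition $\kappa_{N}\ge \log N/(\eps_{W}N)$ together with $W\ge\eps_{W}$ ensures that the maximal expected degree is at least of order $\log N$. The bound then gives, with probability $1-N^{-r}$, $\|\AN-E\AN\|_{2}\le c\sqrt{N\kappa_{N}}$. The random part is therefore $O(1/\sqrt{N\kappa_{N}})$.

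For the deterministic part, I will use Assumption~\ref{asmp:graphon_piecewise+positive}(1). On cells $\BN_{i}\times\BN_{j}$ that do not straddle a breakpoint $\iota_{k}$, $|W-\bar W^{(N)}|\le 2\tx{Lip}_{W}/N$. The cells that straddle a breakpoint occupy measure $O(k_{W}/N)$, and there the difference is bounded by $1$. Bounding the operator norm by the Hilbert--Schmidt norm then gives $O(1/\sqrt{N})$. This is $O(1/\sqrt{N\kappa_{N}})$ because $\kappa_{N}\le1$.

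Next I will apply Davis--Kahan. Since $\lambda_{\max}(\TW)$ is simple and $\TW$ is compact, there is a spectral gap $\delta>0$ between $\lambda_{\max}(\TW)$ and the rest of the spectrum of $\TW$. Weyl's inequality shows the top eigenvalue of $\ThWN$ stays within $\|\ThWN-\TW\|_{2}$ of $\lambda_{\max}(\TW)$. Hence, for $N$ large, the gap for $\ThWN$ is at least $\delta/2$, and
\begin{align*}
\min_{s\in\{\pm1\}}\|\phiN-s\phis\|_{2}\le \frac{2\sqrt2\,\|\ThWN-\TW\|_{2}}{\delta}.
\end{align*}

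It remains to fix the sign. Both $\phiN$ and $\phis$ are positive: $\phis$ by assumption, and $\phiN$ by the choice of $\bfvN$ as the Perron vector, since $\mtcg$ is connected with high probability under the degree condition. Hence $\langle\phiN,\phis\rangle>0$. We also have $\|\phiN+\phis\|_{2}^{2}=2+2\langle\phiN,\phis\rangle\ge 2$, so $\|\phiN+\phis\|_{2}$ cannot be small, and the minimum in the Davis--Kahan bound is attained at $s=1$.

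The main obstacle is the random-matrix concentration step in the sparse regime $N\kappa_{N}\asymp\log N$. There, the naive matrix Bernstein inequality only gives $\sqrt{N\kappa_{N}\log N}$. To obtain the sharp $\sqrt{N\kappa_{N}}$ I would invoke a known result for inhomogeneous random graphs with maximal expected degree at least $c\log N$ (Lei--Rinaldo / Feige--Ofek type). The factor $1/\eps_{W}$ in the assumption on $\kappa_{N}$ is what supplies that constant.
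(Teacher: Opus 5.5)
Your proposal is correct and follows the same route as the paper's sketch. You split $\|\ThWN-\TW\|_2\le\|\ThWN-\TWN\|_2+\|\TWN-\TW\|_2$, bound the random part by $O(1/\sqrt{N\kappa_N})$ via Lei--Rinaldo and the deterministic part by $O(1/\sqrt N)$ via the piecewise-Lipschitz assumption, and finish with Davis--Kahan; you also fill in details the paper leaves implicit, namely identifying $\phiN$ as the top eigenfunction of $\ThWN$, the diagonal correction, the Weyl gap argument and the sign choice.

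One minor caution, which applies equally to the paper: since $p^{(N)}_{ij}\ge\log N/N$ sits exactly at the connectivity threshold, connectivity with probability $1-N^{-r}$ is not actually guaranteed. You do not need it, though: the simple top eigenvalue from your Weyl argument, together with Perron--Frobenius for nonnegative matrices, already gives $\bfvN\ge 0$ and hence $\langle\phiN,\phis\rangle\ge 0$, which is all your sign step uses.
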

\begin{proof}
    See Appendix~\ref{sec:proof_thm:vector}.
\end{proof}

The result indicates an approximation of $\phiN$ by $\phis$ for large $N$.
That is, $\phiN(y)$ is close to $\phis(y)$ for most points $y \in \mtci$.
As a consequence, for example, the opinion mean $\int_{\mtci} \phiN(y) dy$ is similar to its limit version $\int_{\mtci} \phis(y) dy$, which will be generalized to arbitrary moments of the opinion function in the next section.

Note that from Proposition~\ref{prop_bifurcation}, the opinion vector $\bfxN$ is close to $\bfvN$ with a small error when $u$ is slightly larger than the bifurcation threshold $u^{*}$.
As a result, the opinions $\bfxN$ can be approximated by $\phis$ with an error depending on the expected degree $N \kappa_{N}$ and the difference $u - u^{*}$.

\subsection{Approximation of Opinion Distribution Functions}\label{sec:approx_distribution}
The previous approximation allows us to quantify the asymptotic behavior of the finite-dimensional system near its bifurcation.
In the following, we derive another quantification of opinion behaviors by considering its distribution function.
Consider the measure space on the unit interval $(\mtci, \mtcb(\mtci), \mu)$, where $\mtcb(\mtci)$ is the Borel $\sigma$-algebra and $\mu$ is the Lebesgue measure.
For a measurable function $g: \mtci \to \RR$, denote $F_{g}(y) \Let \mu(\{s\colon g(s) \leq y\})$, where $\mu$ is the Lebesgue measure.
Then the function $F_{g}$ is increasing, right continuous, $\lim_{y \to -\infty} F_{g}(y) = 0$, and $\lim_{y \to +\infty} F_{g}(y) = 1$.
Hence, $F_g$ is a distribution function and it defines a unique Borel measure $\mu_{g}$ on $\RR$ such that $\mu_{g}((a, b]) = F_{g}(b) - F_{g}(a)$ for all $a, b \in \RR$ ~\cite[Theorem 1.16]{folland1999real}.

For two opinion functions $\zetaN, \zetas \in L^{2}(\mtci)$, denote their distribution functions by $F_{N} \Let F_{\zetaN}$ and $F_{*} \Let F_{\zetas}$, respectively.
The difference $F_{N}(b) - F_{N}(a)$ then indicates the proportion of agents holding opinions within the interval $(a, b]$.
The distribution function $F_{N}$ is also the cumulative distribution function of $\zetaN(X)$, where $X$ is the uniform random variable on $\mtci$. 
That is, the opinion, held by a uniformly sampled agent, has probability distribution $F_{N}$.

The following result states that, when the opinion functions $\zetaN$ and $\zetas$ are close, their associated distribution functions are also close to each other.

\begin{thm}[Approximation of distribution functions]\label{thm:distribution}
    Suppose $\|\zetaN - \zetas\|_{2} \leq \eps_{N}$, where $\zetaN, \zetas \in L^{2}([0,1])$ and $\eps_{N} \geq 0$.
    Then it holds that $F_{*}(y - \sqrt{\eps_{N}}) - \eps_{N} \leq F_{N}(y) \leq F_{*}(y + \sqrt{\eps_{N}}) + \eps_{N}$.
    Furthermore, if $\eps_{N} \to 0$ as $N \to \infty$, then $\lim_{N \to \infty} F_{N}(x) = F(x)$ for all points $x \in \RR$, at which $F$ is continuous. 
\end{thm}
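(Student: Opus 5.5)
The plan is a Chebyshev-type (Markov) argument on the set where the two opinion functions differ by more than $\sqrt{\eps_{N}}$, followed by the standard step from a Lévy-type bound to weak convergence. Let $\delta \Let \sqrt{\eps_{N}}$ (the case $\eps_{N}=0$ is handled separately below). Define the exceptional set $\mathcal{D} \Let \{s \in \mtci \colon |\zetaN(s) - \zetas(s)| > \delta\}$. By Markov's inequality applied to $|\zetaN - \zetas|^{2}$,
\begin{align*}
    \mu(\mathcal{D}) \leq \frac{\|\zetaN - \zetas\|_{2}^{2}}{\delta^{2}} \leq \frac{\eps_{N}^{2}}{\eps_{N}} = \eps_{N}.
\end{align*}

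For the upper bound, I will decompose the sublevel set as
\[
\{s\colon \zetaN(s) \leq y\} \subset \{s\colon \zetas(s) \leq y + \delta\} \cup \mathcal{D}.
\]
This holds because, off $\mathcal{D}$, we have $\zetas(s) \leq \zetaN(s) + \delta$. Taking Lebesgue measure gives $F_{N}(y) \leq F_{*}(y+\sqrt{\eps_{N}}) + \eps_{N}$.

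For the lower bound, I will use the symmetric inclusion
\[
\{s\colon \zetas(s) \leq y - \delta\} \subset \{s\colon \zetaN(s) \leq y\} \cup \mathcal{D},
\]
which yields $F_{*}(y - \sqrt{\eps_{N}}) - \eps_{N} \leq F_{N}(y)$.

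When $\eps_{N}=0$, the two functions agree almost everywhere. Hence $F_{N}=F_{*}$ exactly, which is the claimed inequality with zero shifts.

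For the convergence statement, I read $F$ as $F_{*}$. Fix a continuity point $x$ of $F_{*}$. Take $\limsup$ in the upper bound and $\liminf$ in the lower bound. Since $\eps_{N}\to 0$ and $F_{*}$ is monotone, for any fixed $\eta>0$ we eventually have $\sqrt{\eps_{N}}<\eta$. This gives
\[
F_{*}(x-\eta) \leq \liminf_{N\to\infty} F_{N}(x) \leq \limsup_{N\to\infty} F_{N}(x) \leq F_{*}(x+\eta).
\]
Letting $\eta\downarrow 0$ and using continuity of $F_{*}$ at $x$, both outer terms tend to $F_{*}(x)$, which proves $\lim_{N\to\infty} F_{N}(x) = F_{*}(x)$.

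I do not expect a real obstacle; the only delicate point is choosing the threshold $\delta=\sqrt{\eps_{N}}$ so that the Markov bound and the shift in the argument balance. It also requires care with the strict versus non-strict inequalities in the set inclusions, so that the right-continuity convention of $F_{g}$ is respected.
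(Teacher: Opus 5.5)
Your proposal is correct and follows essentially the same route as the paper: a Chebyshev/Markov bound showing the exceptional set where $|\zetaN - \zetas|$ exceeds $\sqrt{\eps_{N}}$ has measure at most $\eps_{N}$, then the two sublevel-set inclusions, then the limit at continuity points of $F_{*}$ (reading $F$ as $F_{*}$ is the intended meaning). Your separate treatment of $\eps_{N}=0$ slightly improves on the paper. Its argument implicitly divides by $\eps_{N}$, and in that case its set $\{|\zetaN-\zetas|\geq 0\}$ is all of $\mtci$, so its measure bound fails even though the conclusion holds trivially.
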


\begin{proof}
    See Appendix~\ref{sec:proof_thm:distribution}.
\end{proof}

The convergence of $F_{N}$ to $F_{*}$ indicates the weak convergence of the associated measures $\mu_{N}$ to $\mu_{*}$ by the Portmanteau theorem, as well as the weak convergence of the random variables $\zetaN(X)$ to $\zetas(X)$, where $X$ is the uniform random variable on $\mtci$. 
Furthermore, since $\zetaN$ and $\zetas$ are uniformly bounded, the result implies that any moment of the distribution function $F_{N}$ can be approximated by $F_{*}$, for example, the mean and variance of $\zetaN$.

From Theorem~\ref{thm:vector} and Proposition~\ref{prop_bifurcation}, we know that the final opinions $\|\thetaN(\cdot, \infty) - \theta(\cdot, \infty)\|_{2}$ can be bounded by a small error, where $\thetaN(\cdot, \infty) \Let \lim_{t \to \infty} \thetaN(\cdot, t)$ and $\theta(\cdot, \infty) \Let \lim_{t \to \infty} \theta(\cdot, t)$.
This further implies that their distribution functions are close to each other.
As a consequence, the opinion pattern in the finite-dimensional case can be quantified by examining the limit distribution over the graphon, which captures the network structure.
An illustrative example, where the graphon has a community structure shown in Fig.~\ref{fig:illus_graphs}, is given in the next section.

\begin{rmk}
    For an opinion vector $\bfx \in \RR^{N}$, the empirical measure $\mu_{\bfx}(B) \Let \frac{1}{N} \sum_{i = 1}^{N} \II_{B}(x_{i})$, defined for all Borel sets $B$, has been studied in the literature (e.g.~\cite{como2011scaling,canuto2012eulerian}).
    This is essentially the same as the measure defined by the distribution function that we introduced earlier.
    For the measure space $(\mtci, \mtcb(\mtci), \mu)$ and a measurable function $g: \mtci \to \RR$, we can define a pushforward measure of $\mu$ by $g$ as $g_{\#}\mu(B) \Let \mu(g^{-1}(B))$, where $B \in \mtcb(\mtci)$.
    Then the measure has an associated distribution function $F_{g}$ such that $F_{g}(y) = g_{\#}\mu((-\infty, y])$~\cite{folland1999real}.
\end{rmk}

\begin{figure*}[!t]
    \centering
    \subfigure[\label{fig:pde_evol}The evolution of the continuum dynamics~\eqref{eq:opinion_continuum} with initial value $\theta(y, 0) = \sin (30 \pi y)$. The nodes form two clusters in the end.]{
        \includegraphics[width=0.32\textwidth]{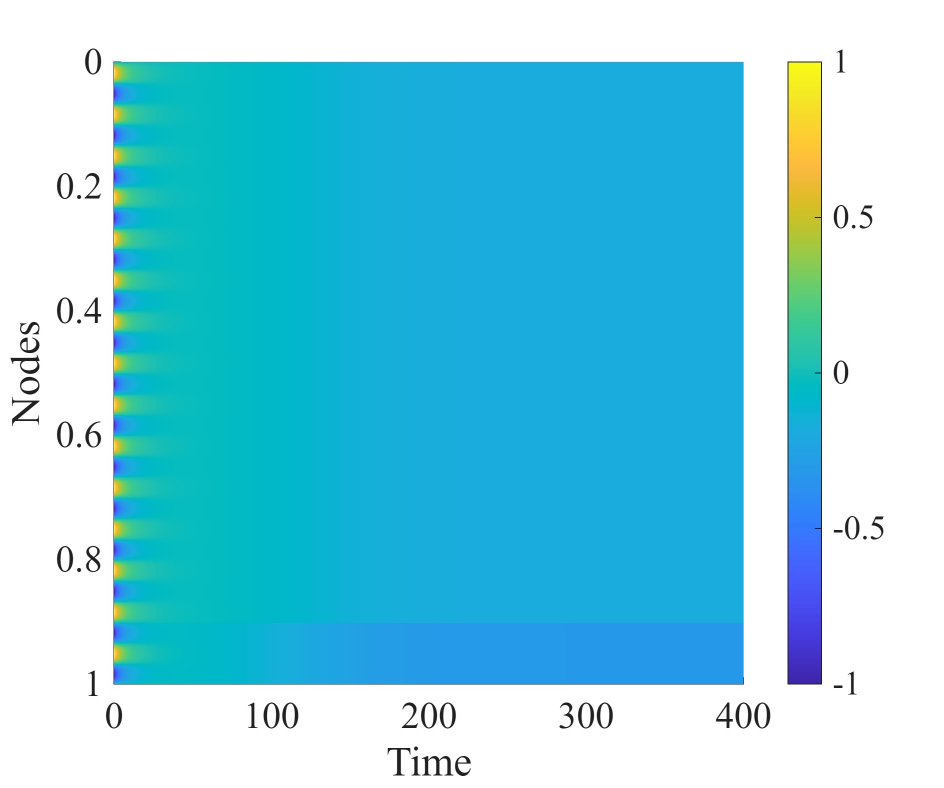}
    }  ~
    \subfigure[\label{fig:pde_sample}Evolution of $500$ nodes in the continuum dynamics, evenly sampled from the interval $\mtci$.]{
        \includegraphics[width=0.29\textwidth]{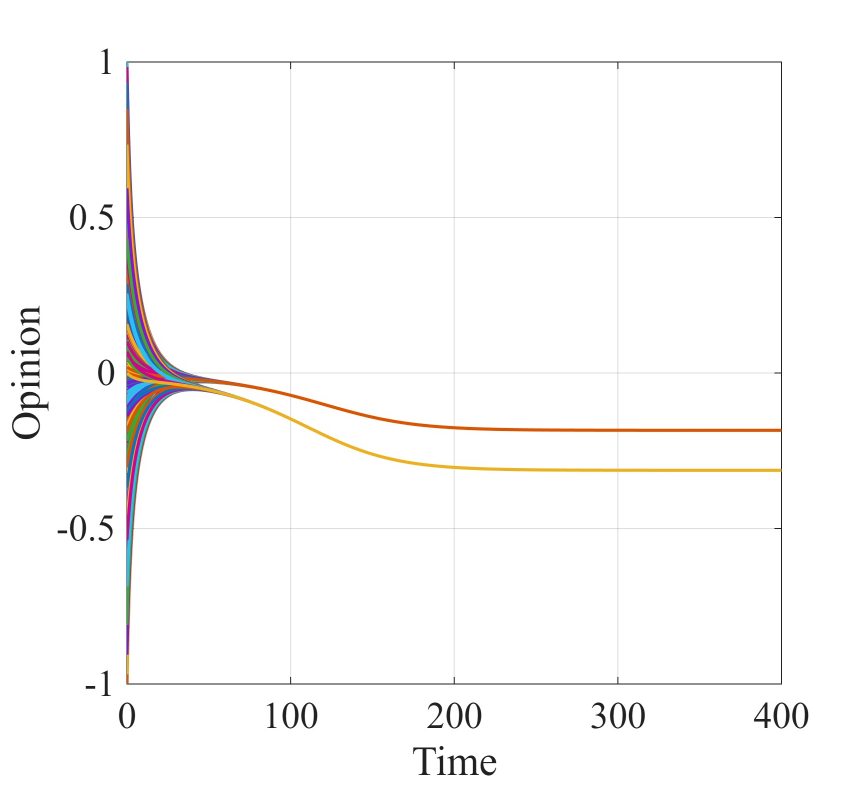}
    } ~
    \subfigure[\label{fig:ode_evol}The finite-dimensional dynamics~\eqref{eq:opinion_model_compactform} over the graph. Blue and red indicate the community labels, whereas the yellow and black dotted lines show the opinion average in each community. ]{~
        \includegraphics[width=0.29\textwidth]{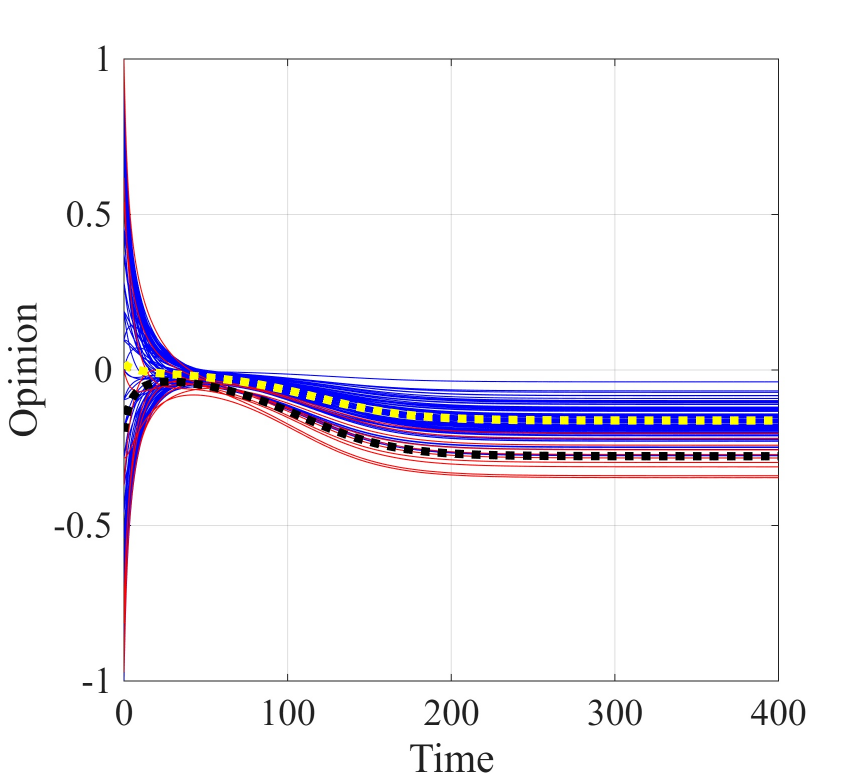}~
    }   
    \caption{\label{fig:pde_and_ode}Illustration of dynamics over a graphon given in Fig.~\ref{fig:illus_graphon} and a finite network given in Fig.~\ref{fig:illus_sbm}.}
\end{figure*}

\begin{figure*}[!t]
    \centering
        \subfigure[\label{fig:dynamic_approx_entire}Error over the entire period.]{
            ~~~
            \includegraphics[width=0.24\textwidth]{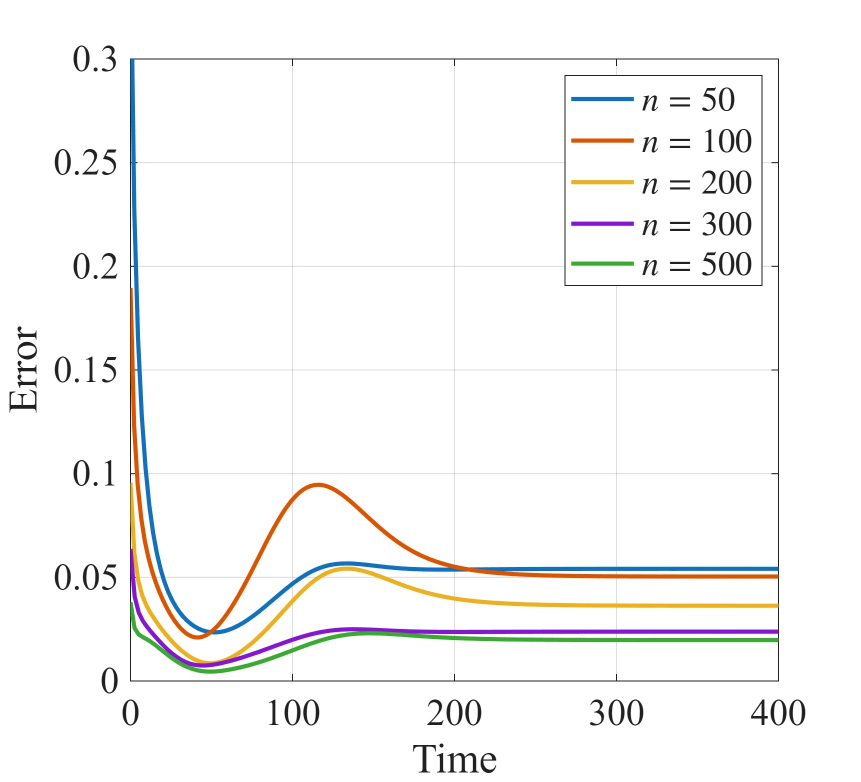} 
            ~~~
        }    
        \subfigure[\label{fig:dynamic_approx_transient}Error during the transient phase. ]{
            ~~~
            \includegraphics[width=0.24\textwidth]{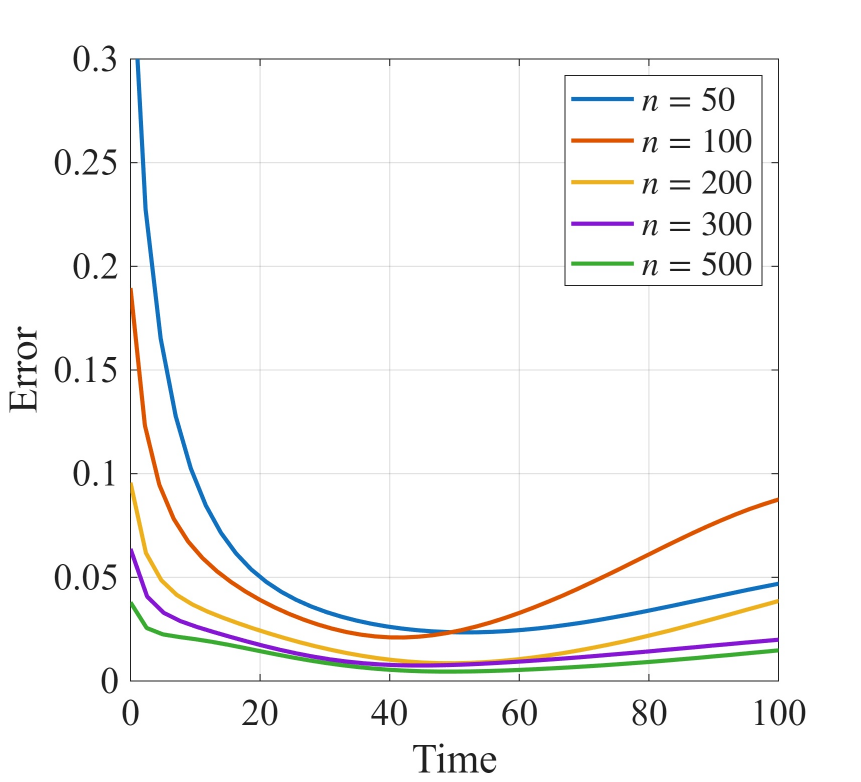}
            ~~~
        }   
        \subfigure[\label{fig:dynamic_approx_same_initial}Error during the transient phase in the case of an identical initial condition.]{
            ~~~
            \includegraphics[width=0.24\textwidth]{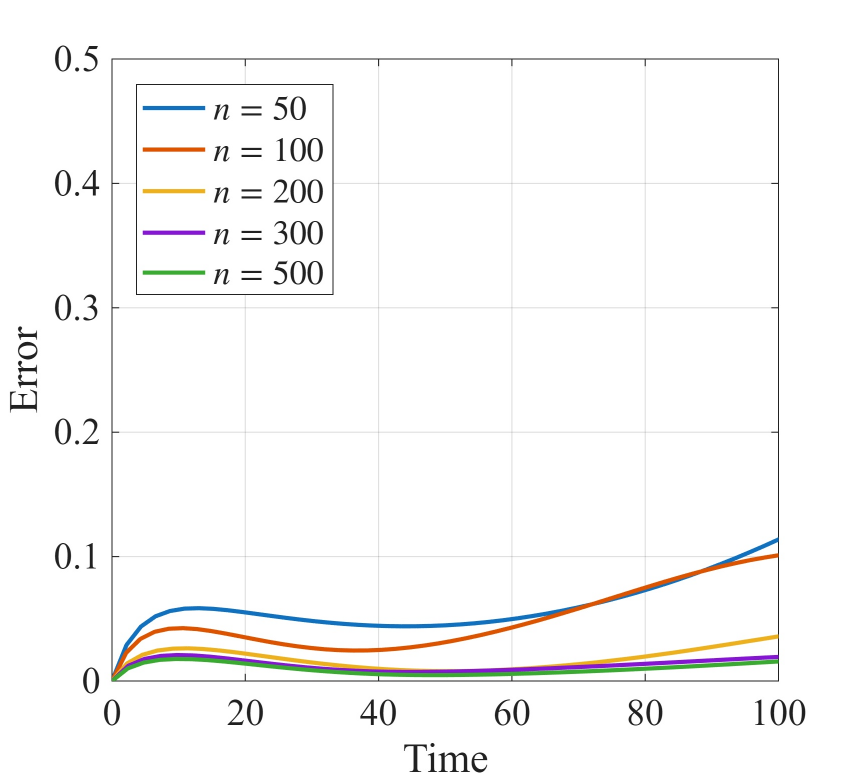}
            ~~~
        } 
    \caption{\label{fig:dynamic_approx}Error between the interpolant $\thetaN(y, t)$ and the dynamics over the graphon $\theta(y, t)$.}
\end{figure*}

\section{Simulation}\label{sec:simulation}
This section presents numerical experiments illustrating the theoretical results obtained in the previous section.

We demonstrate the results using the two-block piecewise constant graphon shown in Fig.~\ref{fig:illus_graphon}, where $W(y,z) = 0.1$ for $0 < y, z \leq 0.9$, $W(y, z) = 0.5$ for $0.9 < y, z \leq 1$, and $W(y, z) = 0.2$ otherwise, and the associated stochastic block model with finite network size $N = 200$, shown in Fig.~\ref{fig:illus_sbm}.
Both the dynamics over the finite graph and the graphon, with identical parameters $\bar{d}$, $\alpha$, $\gamma$, and $u$, eventually form two clusters (see Fig.~\ref{fig:pde_and_ode}).
The error between the interpolant and the dynamics over the graphon is presented in Fig.~\ref{fig:dynamic_approx}.
Fig.~\ref{fig:dynamic_approx_entire} illustrates that the approximation error decreases over the entire period, as the network size $N$ increases.
The decreasing error in Fig.~\ref{fig:dynamic_approx_transient} may result from that both systems first approach the origin due to the damping effect, which can also be observed in Figs.~\ref{fig:pde_sample} and~\ref{fig:ode_evol}.
When the two systems start with the same initial condition, the error is zero in the beginning, and then increases during the transient phase (Fig.~\ref{fig:dynamic_approx_same_initial}), resulting from the inaccuracy of the network model, as indicated by Theorem~\ref{thm:dynamic_approx}.

Next we examine the asymptotic opinion behavior $\thetaN(\cdot, \infty)$ and $\theta(\cdot, \infty)$, plotted in Figs.~\ref{fig:equi_finite} (with $N = 500$) and~\ref{fig:equi_infinite}.
The two opinion functions show similar patterns with a large cluster holding opinions around $0.18$ and a small one around $0.31$.
The two clusters are captured by the community structure of the graphon given in Fig.~\ref{fig:illus_graphon}.
The distribution functions of $\thetaN(\cdot, \infty)$ and $\theta(\cdot, \infty)$ are presented in Fig.~\ref{fig:equi_dist}, showing that the former becomes closer to the latter, as the network size grows.
Note that the distribution function of the final continuum opinions clearly has two jumps, corresponding to the two clusters in the network.
Fig.~\ref{fig:equi_moment} further shows that the moments of the finite-dimensional opinions converge to those of the continuum opinions, as the network size grows. 
In this example, the opinion clusters correspond to the communities, because the latter have different centrality.
Such correspondence will not hold, if the communities have equal size (see e.g.,~\cite{xing2024learning}), indicating the effect of nonlinear dynamics.

\begin{figure*}[!t]
    \centering
        \subfigure[\label{fig:equi_finite}The plot (left) and histogram (right) of $\thetaN(\cdot, \infty)$ with $N = 500$.]{\centering
        ~~~
            \includegraphics[width=0.21\textwidth]{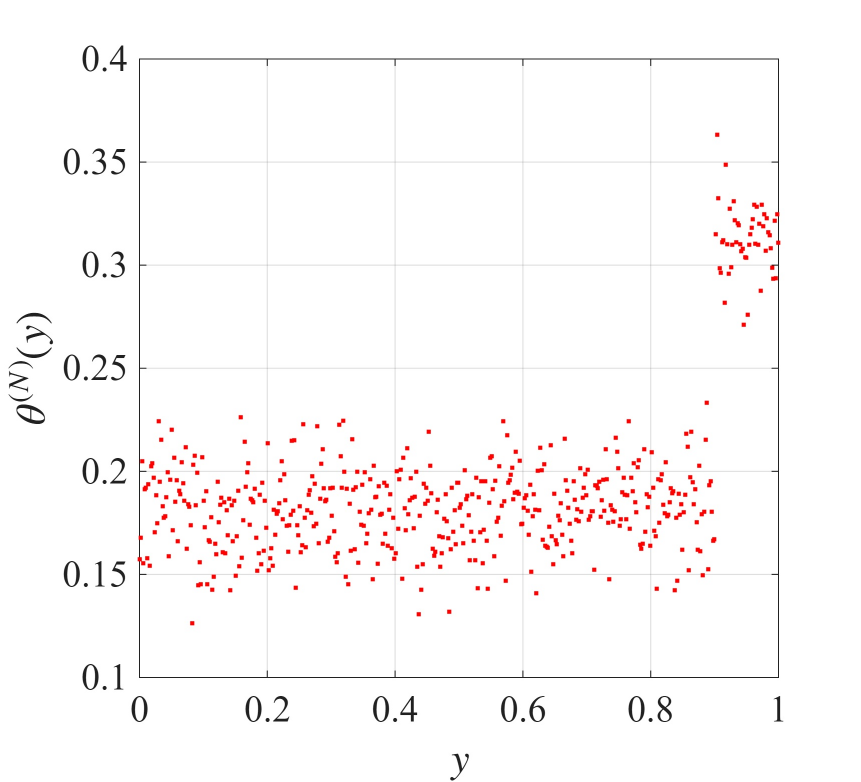} 
            \includegraphics[width=0.21\textwidth]{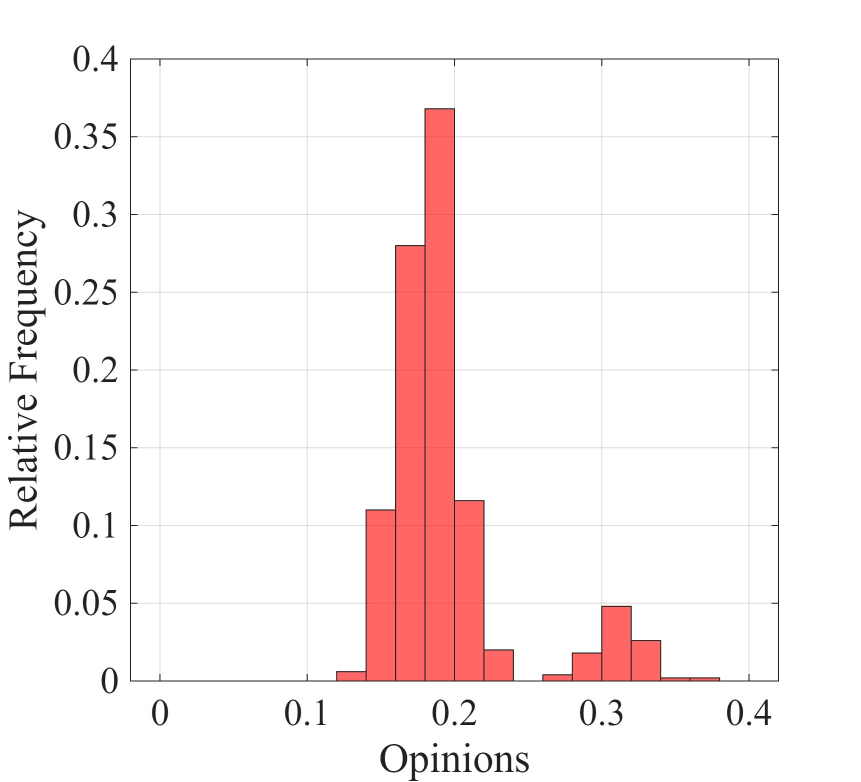} 
            ~~~
        }    
        \subfigure[\label{fig:equi_infinite}The plot (left) and histogram (right) of $\theta(\cdot, \infty)$. ]{
            ~~~
            \includegraphics[width=0.21\textwidth]{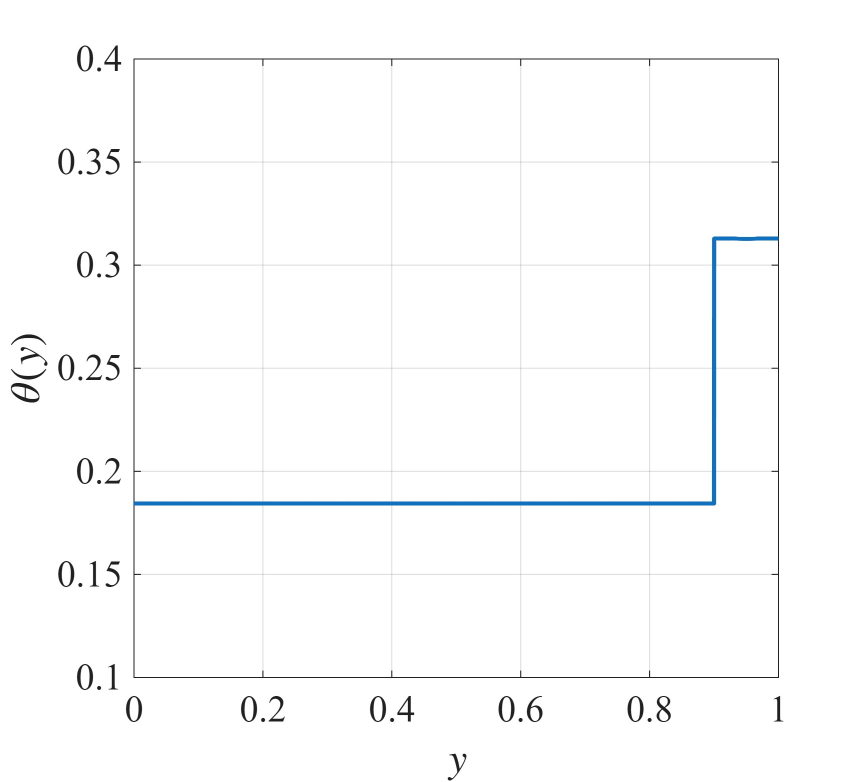}
            \includegraphics[width=0.21\textwidth]{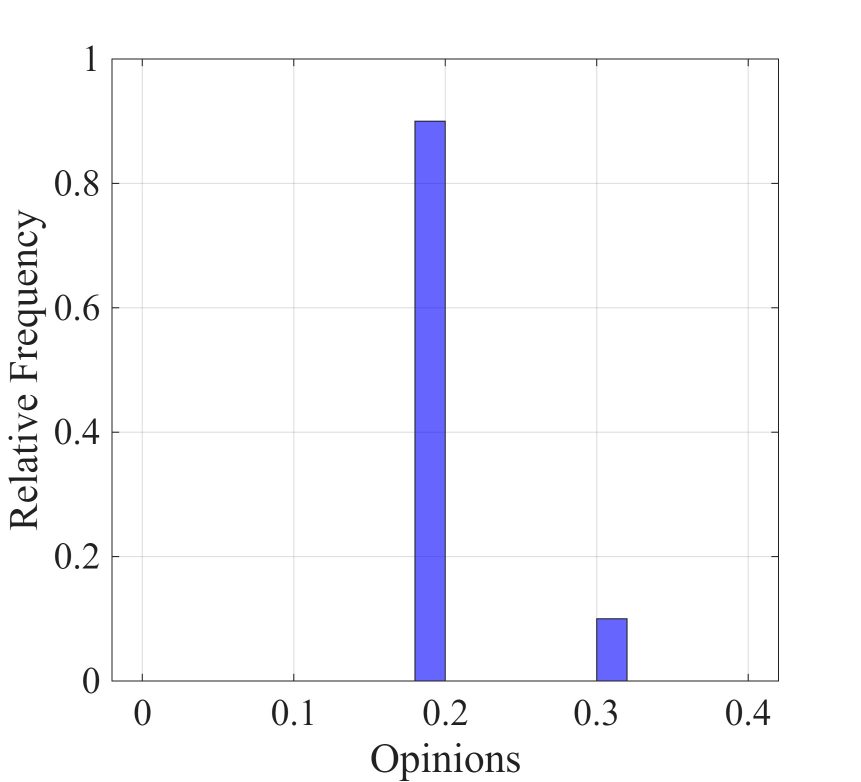}
            ~~~
        }   
        \subfigure[\label{fig:equi_dist}The distribution functions of $\thetaN(\cdot, \infty)$ and $\theta(\cdot, \infty)$ (red and blue, resp.) for $N = 100$, $500$, and $1000$.]{
            \includegraphics[width=0.22\textwidth]{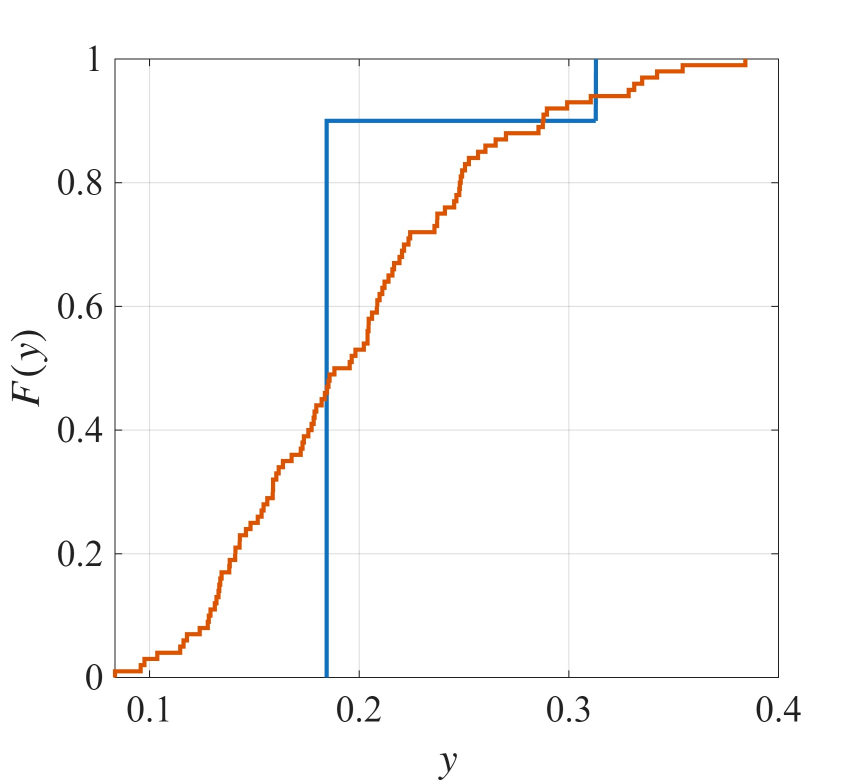}
            \includegraphics[width=0.22\textwidth]{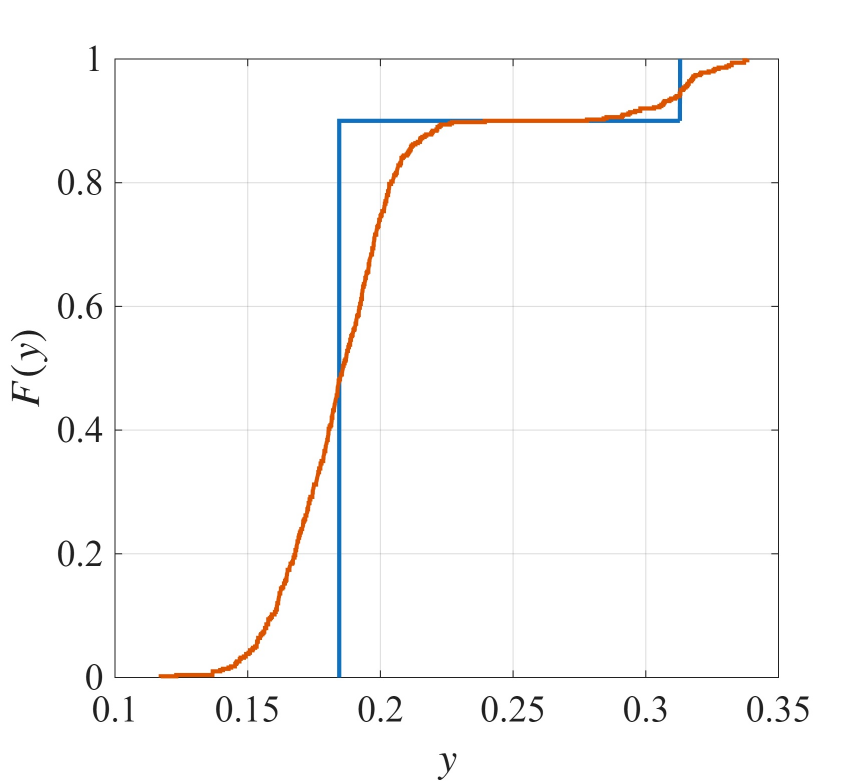}
            \includegraphics[width=0.22\textwidth]{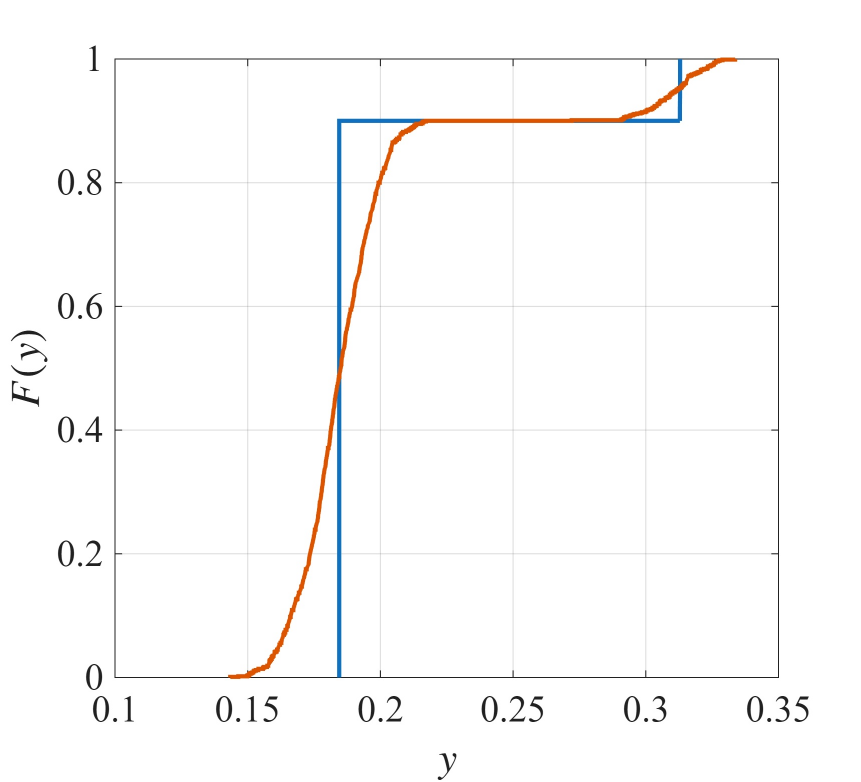}~~
        } 
        \subfigure[\label{fig:equi_moment}Error in moment approximation.]{~
            \includegraphics[width=0.22\textwidth]{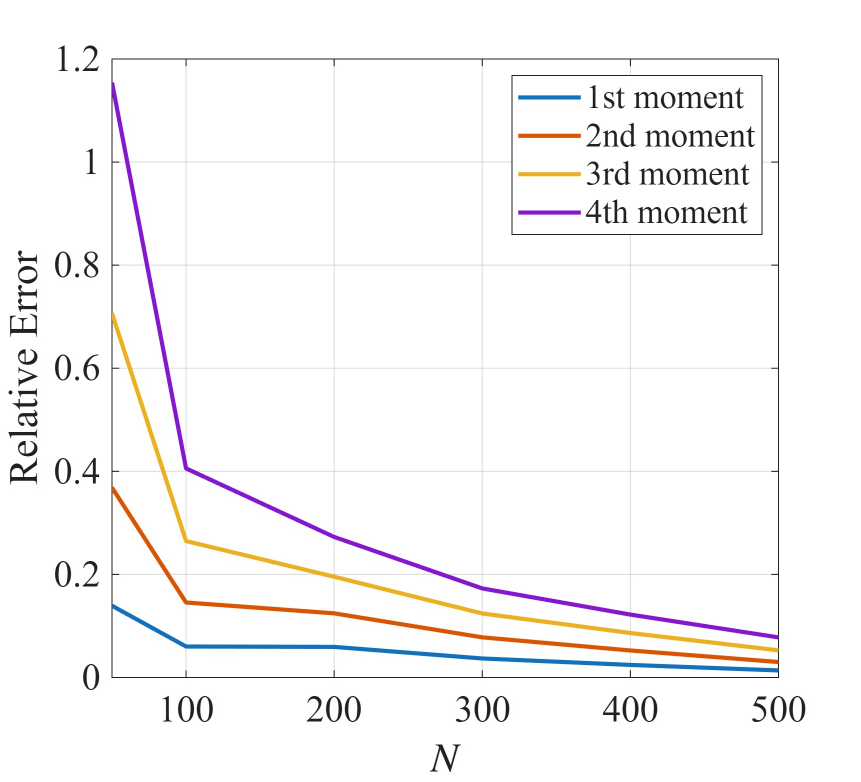}
            ~
        } 
    \caption{\label{fig:equi_approx}Approximation of the opinion $\thetaN(\cdot, \infty)$ by $\theta(\cdot, \infty)$.}
\end{figure*}

\section{Conclusions}\label{sec:conclusion}
In this paper, we studied behavior of nonlinear opinion dynamics over large-scale networks using graphons. 
We showed that both dynamics and equilibria of the original dynamics can be approximated by the dynamics over the graphon, when the network size is large. 
This result further enables a quantitative characterization of the opinion distribution by its continuum limit.
Future work includes studying multiple option case and extending the framework to data-driven modeling of opinion dynamics.




\appendix
\subsection{Proof of Theorem~\ref{thm:dynamic_approx}}\label{sec:proof_thm:dynamic_approx}
By the fundamental theorem of calculus, we have that 
\begin{align*}
    \xiN(y, t) = \xiN(y, 0) + \int_{0}^{t} \partial_{\tau} \xiN(y, \tau) d\tau, ~ y \in \mtci,
\end{align*}
where $\partial_{t} \xiN(y, t) = \partial_{t} \thetaN(y, t) - \partial_{t} \theta(y, t)$.
Then it follows from the Minkowski integral inequality that
\begin{align*}
    \|\xiN(\cdot, t)\|_{2} \leq \|\xiN(\cdot, 0)\|_{2} + \int_{0}^{t} \|\partial_{\tau} \xiN(y, \tau)\|_{2} d\tau.
\end{align*}
Note that
\begin{align*}
    &\| \partial_{t} \xiN(\cdot, t) \|_{2} \leq \bar{d} \| \xiN(\cdot, t) \|_{2} + \| \bN - b \|_{2} \\
    &\quad~ + u \Big\| S \Big(\alpha \thetaN(\cdot, t) + \gamma \int_{\mtci} \hatWN(\cdot, z) \thetaN(z, t) dz \Big) \\
    &\quad~ - S \Big(\alpha \theta(\cdot, t) + \gamma \int_{\mtci} W(\cdot, z) \theta(z, t) dz \Big) \Big\|_{2} \\
    &\leq \bar{d} \| \xiN(\cdot, t) \|_{2} + \| \bN - b \|_{2} \\
    &\quad~ + u \Big\| \alpha \thetaN(\cdot, t) + \gamma \int_{\mtci} \hatWN(\cdot, z) \thetaN(z, t) dz  \\
    &\quad~ - \Big(\alpha \theta(\cdot, t) + \gamma \int_{\mtci} W(\cdot, z) \theta(z, t) dz \Big) \Big\|_{2} \\
    &\leq (u \alpha + \bar{d}) \| \xiN(\cdot, t) \|_{2} + \| \bN - b \|_{2} + u |\gamma|  \\
    &\quad~ \Big\| \int_{\mtci} \hatWN(\cdot, z) \thetaN(z, t) dz - \int_{\mtci} W(\cdot, z) \theta(z, t) dz \Big\|_{2}, 
\end{align*}
where the second last inequality is from the Lipschitz property of $S$.
Decompose the difference below as follows
\begin{align*}
    &\Big\| \int_{\mtci} \hatWN(\cdot, z) \thetaN(z, t) dz - \int_{\mtci} W(\cdot, z) \theta(z, t) dz \Big\|_{2} \\
    &\leq \Big\| \int_{\mtci} \hatWN(\cdot, z) \thetaN(z, t) dz - \int_{\mtci} W(\cdot, z) \thetaN(z, t) dz \Big\|_{2} \\
    &\quad~ + \Big\| \int_{\mtci} W(\cdot, z) \thetaN(z, t) dz - \int_{\mtci} W(\cdot, z) \theta(z, t) dz \Big\|_{2} \\
    &= \Big\| \int_{\mtci} (\hatWN(\cdot, z) - W(\cdot, z)) \thetaN(z, t) dz \Big\|_{2} \\
    &\quad~ + \Big\| \int_{\mtci} W(\cdot, z) (\thetaN(z, t) - \theta(z, t)) dz \Big\|_{2}  \teL (I) + (II).
\end{align*}

For the first term, from the definition of an integral linear operator, we have that
\begin{align*}
    (I) &= \Big\| \int_{\mtci} (\hatWN(\cdot, z) - W(\cdot, z)) \thetaN(z, t) dz \Big\|_{2} \\
    &= \| (\ThWN - \TW) \thetaN(\cdot, t) \|_{2} \\
    &\leq \| \ThWN - \TW \|_{2} \| \thetaN(\cdot, t) \|_{2} \\
    &\leq C_{\tx{s}} \| \ThWN - \TW \|_{2},
\end{align*}
where the last inequality follows from the boundedness of the finite-dimensional system~\eqref{eq:opinion_model_compactform} and the constant $C_{\tx{s}}$ does not depend on time $t$ or the network size $n$, but only on the parameters $u$, $\alpha$, $\bar{d}$, and $\gamma$, and on the initial condition $\theta(\cdot, 0)$.

For the second term, it holds that
\begin{align*}
    (II) &= \Big\| \int_{\mtci} W(\cdot, z) (\thetaN(z, t) - \theta(z, t)) dz \Big\|_{2} \\
    &\leq \| \TW \|_{2} \| \xiN(\cdot, t) \|_{2}
    \leq \| \xiN(\cdot, t) \|_{2}.
\end{align*}

To sum up, 
\begin{align*}
    &\| \partial_{t} \xiN(\cdot, t) \|_{2} \leq (u (\alpha + |\gamma|) + \bar{d}) \| \xiN(\cdot, t) \|_{2} \\
    & + \| \bN - b \|_{2} + C_{\tx{s}} u |\gamma|  \| \ThWN - \TW \|_{2},
\end{align*}
which implies that
\begin{align*}
    &\|\xiN(\cdot, t)\|_{2} \leq \|\xiN(\cdot, 0)\|_{2} + \int_{0}^{t} (u (\alpha + |\gamma|) + \bar{d})  \\
    &\Big( \| \xiN(\cdot, t) \|_{2} + \frac{\| \bN - b \|_{2} + C_{\tx{s}} u |\gamma|  \| \ThWN - \TW \|_{2}}{u (\alpha + |\gamma|) + \bar{d}} \Big)  d\tau.
\end{align*}
From the Gronwall inequality, it follows that
\begin{multline*}
    \|\xiN(\cdot, t)\|_{2} \leq \Big( \frac{\| \bN - b \|_{2} + C_{\tx{s}} u |\gamma|  \| \ThWN - \TW \|_{2}}{u (\alpha + |\gamma|) + \bar{d}}  \Big.\\
    \Big. + \|\xiN(\cdot, 0)\|_{2} \Big) e^{(u (\alpha + |\gamma|) + \bar{d}) t}.
\end{multline*}
Since $\theta(\cdot, 0)$ and $b$ are piecewise Lipschitz, we know that both $\| \bN - b \|_{2} = O(1 / \sqrt{N})$ and $\|\xiN(\cdot, 0)\|_{2} = O(1 / \sqrt{N})$.

The bound on $\| \ThWN - \TW \|_{2}$ can be obtained using standard concentration analysis~\cite{avella2018centrality,prisant2025asymptotic}, and we provide a proof sketch here.
First, we need to bound the spectral norm of the difference between $\TW$ and $\TWN$, induced by $\GW$ and the step graphon $\WN$, respectively. 
The latter is defined by the weight matrix $\barAN$ given in Definition~\ref{defn:random_graph} as $\WN(y, z) \Let \sum_{k = 1}^{N} \sum_{\ell = 1}^{N} \baraN_{k \ell} \II_{\BN_{k}}(y) \II_{\BN_{\ell}}(z)$, $y, z \in \mtci$.
The bound is $\| \TWN - \TW \|_{2} = O(1/ \sqrt{N})$, under Assumption~\ref{asmp:graphon_piecewise+positive}~(i).
Next, the difference between the sampled graphon operator $\ThWN$ and the step graphon operator $\TWN$ can be bounded by $\| \ThWN - \TWN \|_{2} = O(1 / \sqrt{N \kappa_{N}})$, if $\kappa_{N} \geq \log N / (\eps_{W} N)$.
This follows from a bound on the adjacency matrix $\AN$ from its expectation $\kappa_{N} \barAN$, $\| \AN - \kappa_{N} \barAN \|_{2} =  O(\sqrt{N \kappa_{N}})$, implied by Theorem 5.2 of \cite{lei2015consistency} and $\| \ThWN - \TWN \|_{2} = \| \AN / \kappa_{N} - \barAN \|_{2} / N$ (see e.g., \cite[Lemmas 2 and 4]{avella2018centrality}).
The conclusion then follows.

\subsection{Proof Sketch of Theorem~\ref{thm:vector}}\label{sec:proof_thm:vector}
The proof is similar to that of Theorem~2 in~\cite{avella2018centrality}.
As in the proof of Theorem~\ref{thm:dynamic_approx}, first we need to obtain bounds $\| \TWN - \TW \|_{2} = O(1/ \sqrt{N})$ and $\| \ThWN - \TWN \|_{2} = O(1 / \sqrt{N \kappa_{N}})$.
Note that the condition $\kappa_{N} \geq \log N / (\eps_{W} N)$ ensures that the random graph is connected with high probability, and hence that the adjacency matrix has a unique largest eigenvalue.
Then the conclusion follows from the Davis--Kahan theorem (see e.g., Lemma~8 in Appendix B of~\cite{avella2018centrality}).

\subsection{Proof of Theorem~\ref{thm:distribution}}\label{sec:proof_thm:distribution}
Denote the set $\mtcs_{N} \Let \{ |\zetaN - \zetas| \geq \sqrt{\eps_{N}}\}$.
Then 
\begin{align*}
    &\int_{\mtci} |\zetaN(y) - \zetas(y)|^2 dy \geq \int_{\mtcs_{N}} |\zetaN(y) - \zetas(y)|^2 dy \\
    &\geq \int_{\mtcs_{N}} \eps_{N} dy = \eps_{N} \mu(\mtcs_{N}),
\end{align*}
implying that $\mu(\mtcs_{N}) \leq \eps_{N}$.
Since $|\zetaN(y) - \zetas(y)| \leq \sqrt{\eps_{N}}$ on $\mtci \setminus \mtcs_{N}$, it holds that $\{s\colon \zetaN(s) \leq y\} \cap (\mtci \setminus \mtcs_{N}) \subset \{s\colon \zetas(s) \leq y + \sqrt{\eps_{N}} \} \cap (\mtci \setminus \mtcs_{N})$, and
\begin{align*}
    &\mu(\{s\colon \zetaN(s) \leq y \}) = \mu(\{s\colon \zetaN(s) \leq y\} \cap (\mtci \setminus \mtcs_{N})) \\
    &\quad~ + \mu(\{s\colon \zetaN(s) \leq y\} \cap \mtcs_{N}) \\
    &\leq \mu(\{s\colon \zetas(s) \leq y + \sqrt{\eps_{N}} \} \cap (\mtci \setminus \mtcs_{N})) + \eps_{N} \\
    &\leq \mu(\{s\colon \zetas(s) \leq y + \sqrt{\eps_{N}} \}) + \eps_{N}.
\end{align*}
On the other hand, $\{s\colon \zetaN(s) \leq y\} \cap (\mtci \setminus \mtcs_{N}) \supset \{s\colon \zetas(s) \leq y - \sqrt{\eps_{N}} \} \cap (\mtci \setminus \mtcs_{N})$, implying that
\begin{align*}
    &\mu(\{s\colon \zetaN(s) \leq y \}) 
    \geq \mu(\{s\colon \zetas(s) \leq y - \sqrt{\eps_{N}} \} ) - \mu(\{s\colon  \\
    & \zetas(s) \leq y - \sqrt{\eps_{N}}\} \cap \mtcs_{N}) \geq \mu(\{s\colon \zetas(s) \leq y - \sqrt{\eps_{N}} \}) - \eps_{N}.
\end{align*}
To sum up, we have that $F_{*}(y - \sqrt{\eps_{N}}) - \eps_{N} \leq F_{N}(y) \leq F_{*}(y + \sqrt{\eps_{N}}) + \eps_{N}$.
If $\eps_{N} \to 0$ as $N \to \infty$, then $F_{N}(y) \to F_{*}(y)$ at all continuous points $y$ of $F_{*}$.

\bibliographystyle{ieeetr}
\bibliography{bibliography}

\end{document}